\setlist{noitemsep,parsep=6pt,partopsep=0pt,topsep=0pt}
 \theoremstyle{remark}
\theoremstyle{plain}
\newtheorem{theorem}{Theorem}
\newtheorem{corollary}{Corollary}
\newtheorem{example}{Example}
\newtheorem{lemma}{Lemma}
\newenvironment{proof}[1][Proof]{\noindent\textbf{#1} }{\ \rule{0.5em}{0.5em}}
\renewcommand{\epsilon}{\varepsilon}
\newcommand{\R}{\mathbb{R}}
\newcommand{\E}{\mathbb{E}}
\newcommand{\U}{\mathcal{U}}
\newcommand{\M}{\mathcal{M}}
\newcommand{\C}{\mathcal{C}}
\newcommand{\V}{\mathcal{V}}
\newcommand{\Q}{\mathcal{Q}}
\renewcommand{\H}{\mathcal{H}}
\newcommand{\n}{\hat{\mathbf{n}}}
\DeclareMathOperator{\diver}{div}
\DeclareMathOperator{\bd}{bd}
\DeclareMathOperator{\closure}{cl}
\DeclareMathOperator{\interior}{int}
\DeclareMathOperator{\supp}{supp}
\newcommand{\norm}[1]{ \lVert #1 \rVert }
  \renewcommand\@seccntformat[1]{\csname the#1\endcsname.{\hskip.7em\relax}} %Gets period after section title
\renewenvironment{proof}[1][\proofname] {\par\pushQED{\qed}\normalfont\topsep6\p@\@plus6\p@\relax\trivlist\item[\hskip\labelsep\bfseries#1\@addpunct{.}]\ignorespaces}{\popQED\endtrivlist\@endpefalse}
\newcommand{\mailto}[1]{\href{mailto:#1}{\texttt{#1}}} %creates an email command
\let\oldfootnote\footnote
\renewcommand\footnote[1]{\oldfootnote{\hspace{.5mm}#1}}
\newcommand{\appendixref}[1]{\hyperref[#1]{Appendix \ref{#1}}}
\tikzstyle{info}=[circle,thick,draw=black,fill=black!25,minimum size=4mm]
\tikzstyle{uninfo}=[circle,thick,draw=black,fill=white,minimum size=4mm]
\tikzstyle{inforecog}=[circle,line width=1mm,draw=black!50,fill=black!25,minimum size=4mm]
\tikzstyle{uninforecog}=[circle,line width=1mm,draw=black!50,fill=white,minimum size=4mm]
\tikzstyle{traded}=[draw, line width=1mm]
\tikzstyle{recog}=[draw=black!50, line width=1mm]
\begin{document}

\begin{titlepage}

\title{Optimal Delegation in a Multidimensional World
}
\author{
Andreas Kleiner\thanks{Department of Economics, Arizona State University.  Email:  \mailto{andreas.kleiner@asu.edu}.}
}

\maketitle
% \begin{abstract}
% \noindent 
We study a model of delegation in which a principal takes a multidimensional action and an agent has private information about a multidimensional state of the world. The principal can design any direct mechanism, including stochastic ones. We provide necessary and sufficient conditions for an arbitrary mechanism to maximize the principal's expected payoff. We also discuss simple conditions which ensure that some \textit{convex delegation set} is optimal. 
A key step of our analysis shows that a mechanism is incentive compatible if and only if its induced indirect utility is convex and lies below the agent's first-best payoff.

% \end{abstract}
\thispagestyle{empty} 

% Principal: she
% Agent: he

\end{titlepage}
\section{Introduction}

In many economic and political environments, a principal delegates decisions to a better-informed agent: a firm appoints a manager to choose investment levels in different projects; 
US Congress delegates power to federal agencies;
a legislative forms a committee to draft bills; a regulator lets a monopolist choose prices. 
Following \citet{Holmstrom:77}, an extensive literature models such delegation problems by assuming that both the action and the state of the world lie in a one-dimensional space. A main result of this literature characterizes when it is optimal for the principal to constrain the agent's choice to lie in an interval, and this conclusion has been used to explain why managers face spending caps, regulators impose price ceilings, and trade agreements specify maximum tariff levels.

The assumption that the action and state space are one-dimensional is made for tractability. In many applications, the underlying states and actions are more complex and more realistically modeled as multidimensional: managers invest in several projects, Congress delegates many decisions to the EPA, and committees draft multiple bills. What mechanisms are optimal in such multidimensional settings? How robust are conclusions obtained for one-dimensional models? And can we still expect that relatively simple mechanisms are often optimal? 

To study these questions, we consider a principal that takes a multidimensional action and faces an agent with private information about a multidimensional state of the world (the agent's \textit{type}). Payoffs depend on the action and the state of the world, and transfers are infeasible. The principal can design arbitrary mechanisms, including stochastic ones, to maximize her expected payoff. Our main result characterizes, for an arbitrary mechanism, when this mechanism is optimal. Often, it is optimal to delegate the decision to the agent but to constrain the agent by requiring that her action lies in some set. For convex delegation sets, 
%with smooth boundaries, 
we provide a simple characterization, which is a direct analog of conditions characterizing when interval delegation is optimal in one-dimensional models. 
Our results illustrate how a principal can benefit from optimally bundling independent decisions.
% Our results make precise how to bundle decisions optimally: earlier work provides a robust intuition that even if decisions are independent, the principal can benefit by bundling them \citep[e.g.,][]{jackson07}. Our characterization shows how to do so optimally by giving the agent more flexibility for one decision if her other decision is moderate.
Even for one-dimensional models, our approach provides new insights: Our main result characterizes for arbitrary mechanisms---not just interval delegation sets---when this mechanism is optimal. And as corollaries, we obtain a novel condition under which some interval delegation set will be optimal and a comparative statics result showing when the agent will get more discretion.\todo{This result still needs to be added to the file.}

A key step to deriving our results lies in obtaining a simple characterization of the set of feasible mechanisms. Given a mechanism, the corresponding \textit{indirect utility} assigns to any type the payoff this type would get by choosing his report optimally. This payoff must be less than the \textit{first-best payoff}, i.e., the payoff this type would get if he could choose the action without any constraints. Moreover, the indirect utility must be a convex function since it is the maximum of a family of affine functions. \autoref{lemma:feasible_set} shows that any function satisfying these two properties is the indirect utility of an incentive-compatible mechanism.

This characterization is easy to use and already helpful for one-dimensional delegation models. Our formulation differs from the previous literature, which often considered only deterministic mechanisms. Since the convex combination of two incentive-compatible mechanisms is not necessarily incentive compatible, the set of deterministic mechanisms is not even convex.\footnote{Some earlier papers also consider stochastic mechanisms (or allow for money burning/restricted transfer) and obtain a convex set of mechanisms; see, for example, \citet{AB:13,KM:09,AE:17,amador20,kartik21,kleiner21}.} Moreover, a common approach is to first treat the model as one with transfers and then impose that these transfers are zero (or negative). Compared to this approach, formulating the problem via indirect utilities is more direct and provides valuable geometric insights into which mechanisms can be optimal. For the multidimensional problem, the approach via indirect utilities provides additional benefits because it circumvents intricate characterizations of incentive compatibility \citep[see][]{rochet87}.

To find the optimal mechanism, we formulate the principal's problem in terms of indirect utilities. In this formulation, the problem becomes a linear program, and we use linear programming duality to derive necessary and sufficient conditions for a given mechanism to be optimal. Typically, optimal mechanisms pool certain types, and our main result shows that a mechanism is optimal if conditional on any pooling region, a stochastic dominance condition (using the convex order) is satisfied. Intuitively, this condition requires that restricted to the pooling region (where the indirect utility function is affine), any convex indirect utility yields a lower payoff.
If the pooling regions are at most one-dimensional, the stochastic dominance condition has a simple formulation in terms of majorization. Using this observation, we provide necessary and sufficient conditions for a convex delegation set with a smooth boundary to be optimal. These conditions are easy to check and are straightforward extensions of conditions that ensure the optimality of interval delegation sets in one-dimensional models \citep[see][]{AB:13}. 

\paragraph{Related Literature}

The literature on delegation has focused mainly on problems in which the principal delegates a single one-dimensional decision and therefore assumed that both the action and state spaces are one-dimensional; see, for example, \citet{Holmstrom:77,Holmstrom:84,MS:91,AM:08,AB:13,KZ:19}.

A few delegation papers do consider richer action and/or type spaces. \citet{Armstrong95} considers an agent with two-dimensional private information and discusses several applications. Since the principal's action is assumed to be one-dimensional (and only interval delegation sets are considered), there is only limited scope to screen two-dimensional types in his analysis. 
\citet{koessler2012optimal} characterize the optimal mechanism in a setting where two decisions depend on a single-dimensional underlying state.
\citet{galperti2019theory} considers multidimensional information and actions but restricts the principal's choice to a particular class of ``budgeting mechanisms''. The closest paper to ours is \citet{frankel2016delegating}, which studies the delegation of several independent decisions, which yield multidimensional action and state spaces. For quadratic preferences with a constant bias, he shows that if the states are independently and identically distributed according to normal distributions then it is optimal to delegate a `half space'. Without the normality assumption, he shows that the principal's payoff from such a mechanism converges to the first-best as the number of independent decision problems grows.
\citet{frankel2014aligned} also considers multidimensional delegation problems and characterizes the max-min optimal mechanism, which maximizes the principal's payoff against the worst-case preference type of the agent.

The elicitation of information about multiple independent decisions from a biased agent has been studied in general mechanism design \citep[e.g.,][]{jackson07} and cheap talk environments \citep{chakraborty2007comparative,lipnowski2020cheap}. \citet{jackson07} show that by linking independent decisions, the principal's payoff converges to the first-best as the number of decisions grows. Our results can be used to show how the principal should optimally link decisions, which can be important if there is a limited number of decisions.

On a methodological level, our work is related to the literature on multidimensional mechanism design, and in particular on multiproduct monopolists \citep[see, e.g.,][]{rochet87,manelli06,manelli07,daskalakis-etal2017,haghpanah21}.

\section{Model}

A principal chooses an action $a\in \R^n$. An agent is privately informed about the state of the world $s\in S$, where $S\subseteq \R^n$ is compact and convex. The agent's and principal's payoffs depend on both the action and the state of the world, and are given by
\begin{align*}
u_A(a,s)&:= a\cdot s + b(a)\\
u_P(a,s)&:=a\cdot g(s) + \kappa b(a),
\end{align*}
respectively, where $b:\R^n\rightarrow \R$ is strictly concave, differentiable with a Lipschitz-continuous gradient mapping, and satisfies $\lim_{\norm{a}\rightarrow \infty}\frac{b(a)}{\norm{a}}=-\infty$,\footnote{Since we do not artificially constrain the set of actions, some assumptions are needed to ensure that for every type $s\in S$ there is an optimal action. The current assumptions ensure this and simplify some arguments, but weaker conditions on $b$ could be imposed.} $g: S\rightarrow \R^n$ is continuous, and $\kappa>0$.

We assume that the state $s$ is distributed according to a probability distribution $F$ with differentiable density $f$ and support $S$.
The principal aims to maximize her expected payoff and can design arbitrary mechanisms.

The revelation principle applies and we define a \emph{mechanism} to be a function $m:S\rightarrow \Delta(\R^n)$ such that expected payoffs are integrable.\footnote{We denote by $\Delta(\R^n)$ the Borel $\sigma$-algebra on $\R^n$.} To simplify notation, we extend the domain of $b(\cdot)$ and $u_i(\cdot,s)$ linearly to include probability distributions over $\R^n$, so that $b(m(s))=\E_{m(s)}[b(a)]$ and analogously for $u_i(\cdot,s)$. A mechanism is \emph{incentive compatible} if for all $s$ and $s'$ in $S$,
\begin{align*}
    u_A(m(s),s)\ge u_A(m(s'),s).
\end{align*}

\section{Characterizing incentive-compatible mechanisms}

We characterize the set of incentive-compatible mechanisms in terms of their indirect utilities.
To any incentive-compatible mechanism $m$ corresponds an \emph{indirect utility} $U:\R^n\rightarrow\R$ defined by 
\[U(s):=\sup_{s'\in S} \E[m(s')]\cdot s + b(m(s')).\]
Which indirect utilities correspond to some incentive-compatible mechanism? First, any indirect utility is convex as the supremum of a family of functions that are affine in the state $s$. Second, in the absence of transfers the agent's utility cannot be higher than if he was free to choose his action. Defining the \textit{first-best payoff} $h:\R^n\rightarrow \R$ by
 \[h(s):=\sup_{a\in \R^n} a\cdot s + b(a),\]
$U\le h$ is clearly necessary.\footnote{We denote the pointwise order by $\le$, so $U\le h$ means $U(s)\le h(s)$ for all $s$ in the common domain of $U$ and $h$.} The following result shows that these two conditions characterize the set of feasible indirect utilities.

\begin{lemma}\label{lemma:feasible_set}
An indirect utility $U$ corresponds to an incentive-compatible mechanism if and only if $U$ is convex and lies below the first-best payoff: $U\le h$.
\end{lemma}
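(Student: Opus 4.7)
\emph{Necessity} is essentially immediate. Any $U(s)=\sup_{s'\in S} \E[m(s')]\cdot s + b(m(s'))$ is a pointwise supremum of affine functions of $s$ and hence convex; for each fixed $s'$, concavity of $b$ and Jensen give $\E[m(s')]\cdot s + b(m(s'))\le \E[m(s')]\cdot s + b(\E[m(s')])\le h(s)$, so $U\le h$.

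For \emph{sufficiency}, my plan is to exploit Fenchel conjugacy. Note that $h$ is itself the conjugate of $-b$: $h(s)=\sup_a a\cdot s -(-b)(a)$, and since $-b$ is proper, convex and continuous, Fenchel--Moreau gives $h^{\ast}=(-b)^{\ast\ast}=-b$. Applying conjugation to $U\le h$ reverses the inequality, so $U^{\ast}\ge -b$, i.e.\ $-U^{\ast}(a)\le b(a)$ for all $a$. For each $s\in S$ I pick a subgradient $a(s)\in \partial U(s)$, which is nonempty because $U$ is real-valued and convex on $\R^n$ and which can be selected measurably via Kuratowski--Ryll-Nardzewski. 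The Fenchel equality then yields
\[U(s)-a(s)\cdot s \;=\; -U^{\ast}(a(s)) \;\le\; b(a(s)),\]
so the pair $(a(s),\,U(s)-a(s)\cdot s)$ lies in the hypograph of $b$.

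The next step is to realize this pair as $(\E[m(s)],\E[b(m(s))])$ via a two-point randomization. Fix any direction $d\ne 0$ and write $y(s):=U(s)-a(s)\cdot s$. The concave function $t\mapsto b(a(s)+td)$ is continuous, is at least $y(s)$ at $t=0$, and tends to $-\infty$ as $|t|\to\infty$ by the growth assumption $b(a)/\norm{a}\to-\infty$; hence there exist $t^{-}(s)<0<t^{+}(s)$ with $b(a(s)+t^{\pm}(s)d)=y(s)$. Letting $m(s)$ place mass $t^{+}(s)/(t^{+}(s)-t^{-}(s))$ on $a(s)+t^{-}(s)d$ and mass $-t^{-}(s)/(t^{+}(s)-t^{-}(s))$ on $a(s)+t^{+}(s)d$ gives $\E[m(s)]=a(s)$ and $\E[b(m(s))]=y(s)$. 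A report of $s'$ at state $s$ then delivers $a(s')\cdot s + y(s')=U(s')+a(s')\cdot(s-s')\le U(s)$ with equality at $s'=s$, which simultaneously establishes incentive compatibility and identifies the induced indirect utility with $U$.

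The main obstacle I foresee is not conceptual---once Fenchel duality is in place, the construction is natural---but measure-theoretic: arranging the selections $s\mapsto a(s)$ and $s\mapsto (t^{-}(s),t^{+}(s))$ to be jointly Borel measurable and ensuring the resulting mechanism has integrable expected payoffs. This should follow from Kuratowski--Ryll-Nardzewski applied to the (upper hemicontinuous, closed-convex-valued) subdifferential correspondence, continuity of the roots $t^{\pm}$ in $s$ via the Lipschitz gradient of $b$, and compactness of $S$ combined with the growth condition, but it requires some care in the write-up.
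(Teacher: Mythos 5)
Your proof is correct and follows essentially the same route as the paper's: both establish $-U^{*}(a)\le b(a)$ for $a\in\partial U(s)$ from $U\le h$ via conjugation (the paper phrases this as $h^{*}\le U^{*}$ with $h^{*}=-b$), both realize the target payoff $U(s)$ by a lottery with mean $a(s)$ whose expected $b$-value is pushed down to $U(s)-a(s)\cdot s$ using strict concavity and the growth condition on $b$, and both verify incentive compatibility via the subgradient inequality. Your two-point randomization and the measurability remarks are just a more explicit rendering of the paper's footnoted construction.
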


\begin{figure}[t]
\centering
\includegraphics[width=0.5\textwidth]{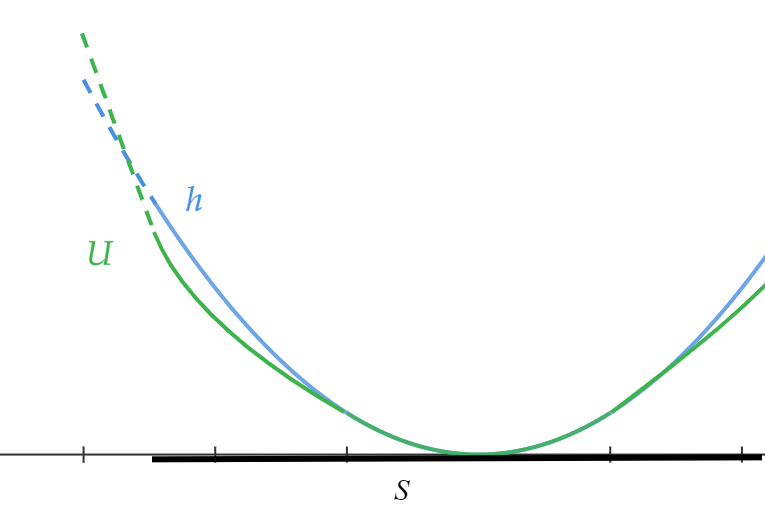}
\caption{The function $U$ satisfies $U(s)\le h(s)$ for all $s\in S$ but does not correspond to a feasible mechanism. To see this, note that there is no convex extension of $U$ to $\R$ such that the extension lies below $h$. \autoref{lemma:feasible_set} then implies that $U$ is not the indirect utility of any feasible mechanism.}
\label{fig:domain_indirect_util}
\end{figure}

Intuitively, if $U$ is convex then it would correspond to an incentive-compatible mechanism if transfers were available and the agent had quasi-linear preferences. If the required transfers are all negative then we can use the agent's risk aversion (coming from the strict concavity of $b$) to simulate these transfers via stochastic actions. One can show that $U\le h$ implies that the required transfers are negative. This last step relies on the domain of $U$ and $h$ being large enough and it would not suffice to require only that $U(s)\le h(s)$ for all $s\in S$. \autoref{fig:domain_indirect_util} illustrates a convex function $U$ which lies below $h$ on all of $S$, but which does not correspond to a mechanism because the lotteries assigned to low types would yield a strictly higher payoff than the first-best payoffs for some hypothetical types, an impossibility.

\begin{proof}
Let us first state three basic observations from convex analysis. The convex conjugate of a function $U$ is denoted by $U^*$ and defined by $U^*(a):=\sup_{s\in\R^n} a\cdot s - U(s)$. We will use the following facts, which follow immediately from this definition: (i) $h=(-b)^*$, (ii) $U\le h$ implies $h^*\le U^*$, and (iii) $a\in\partial U(s)$ implies $U^*(a)=a\cdot s -U(s)$.\footnote{Here, $\partial U(s)$ denotes the subdifferential of $U$ at $s$. To see (iii), note that the definition of $U^*$ implies $U^*(a)\ge a\cdot s -U(s)$. Conversely, convexity of $U$ implies that for all $s'$, 
$a\cdot s-U(s)\ge a\cdot s'-U(s')$. Taking the supremum of the right-hand side with respect to $s'$ yields $a\cdot s-U(s) \ge U^*(a)$.}

Suppose $U$ is convex and satisfies $U\le h$. Let the mechanism $m$ assign to any type $s\in S$ a lottery with expected value $a\in\partial U(s)$ that yields the payoff $a\cdot s + b(a) - U^*(a)+h^*(a)$. Such a lottery exists because $a\cdot s + b(a)$ would be the payoff for type $s$ from always getting action $a$, because fact (ii) implies that the agents payoff is lower, and because $b$ is strictly concave.\footnote{More formally, strict concavity of $b$ implies that for any $a\in\R^n$ and nonzero $d\in\R^n$ there is $\varepsilon>0$ such that $1/2[b(a+d)+b(a-d)]<  b(a)-\varepsilon$. It follows that for any $\lambda>1$,  $1/2 [b(a+ \lambda d)+b(a- \lambda d)] \le b(a)-\lambda \epsilon.$
Therefore, by choosing $\lambda$ arbitrarily large, one can design lotteries with expected value $a$ that yield arbitrarily low payoff to the agent.
}
Then facts (i) and (iii) imply that the payoff of a truthful type $s$ is $U(s)$:
\begin{align*}
    u_A(m(s),s) = s\cdot a + b(a) - U^*(a)+h^*(a) = U(s).
\end{align*}
It remains to show that $m$ is incentive compatible. For all $s$ and $s'$,
\begin{align*}
    &u_A(m(s),s) = U(s) \ge U(s') + \E[m(s')]\cdot (s-s') \\
    = &\E[m(s')]\cdot s' + b(m(s')) + \E[m(s')]\cdot (s-s') = u_A(m(s'),s),
\end{align*}
where the first inequality follows since $E[m(s)]\in \partial U(s')$.
\end{proof}

\autoref{fig:example_feasible_mechanisms} illustrates the result for one-dimensional types and quadratic payoffs. It shows four indirect utilities that, according to \autoref{lemma:feasible_set}, correspond to incentive-compatible mechanisms. In \autoref{fig:first}, all types between $s_1$ and $s_2$ obtain their first-best utility and $U$ is affine below $s_1$ and above $s_2$. This indirect utility can be obtained by letting types choose their preferred action from the interval of deterministic actions $[s_1,s_2]$. In \autoref{fig:second}, the menu of actions from which the agent can choose contains an additional deterministic action above $s_2$. The indirect utility in \autoref{fig:third} contains an affine piece that lies strictly below the graph of $h$. This part of the indirect utility corresponds to types that obtain a (nondegenerate) stochastic action, which yields no type its first-best payoff. Finally, \autoref{fig:fourth} illustrates an indirect utility corresponding to a mechanism in which types in two adjacent regions obtain a stochastic action.

\begin{figure}
\centering
\begin{subfigure}{0.45\textwidth}
    \includegraphics[width=\textwidth]{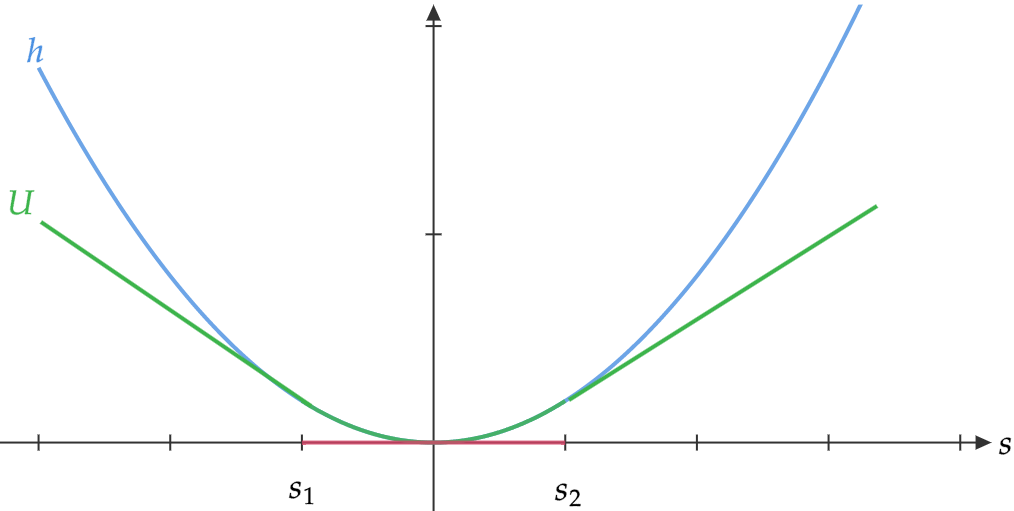}
    \caption{Interval delegation}
    \label{fig:first}
\end{subfigure}
\hfill
\begin{subfigure}{0.45\textwidth}
    \includegraphics[width=\textwidth]{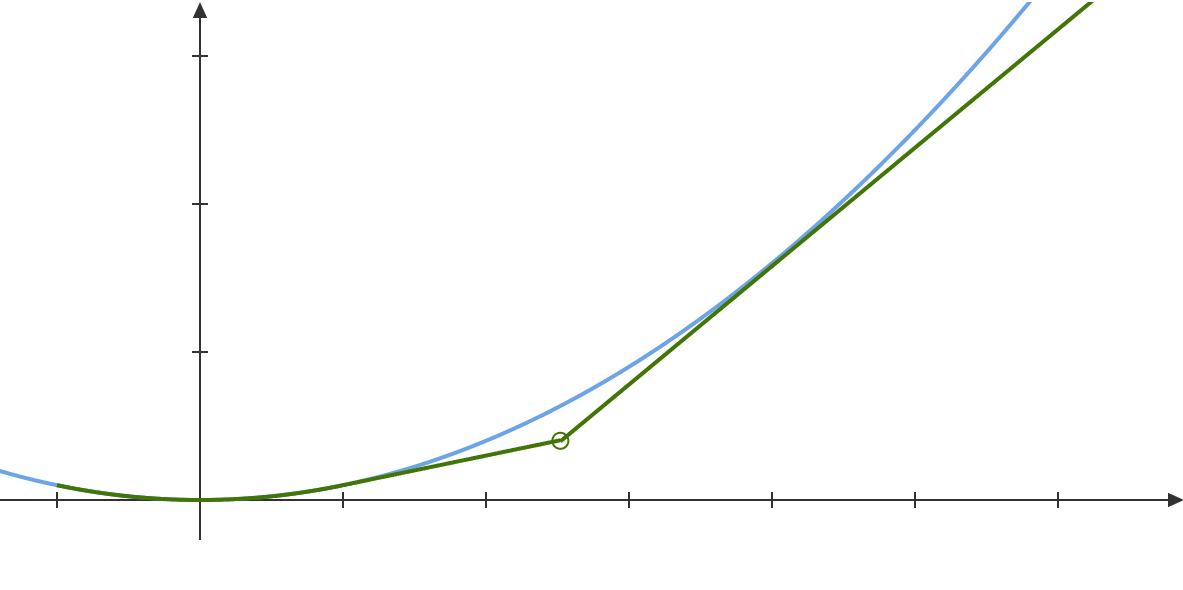}
    \caption{A deterministic mechanism}
    \label{fig:second}
\end{subfigure}
\hfill
\begin{subfigure}{0.45\textwidth}
    \raisebox{.7cm}{\includegraphics[width=\textwidth]{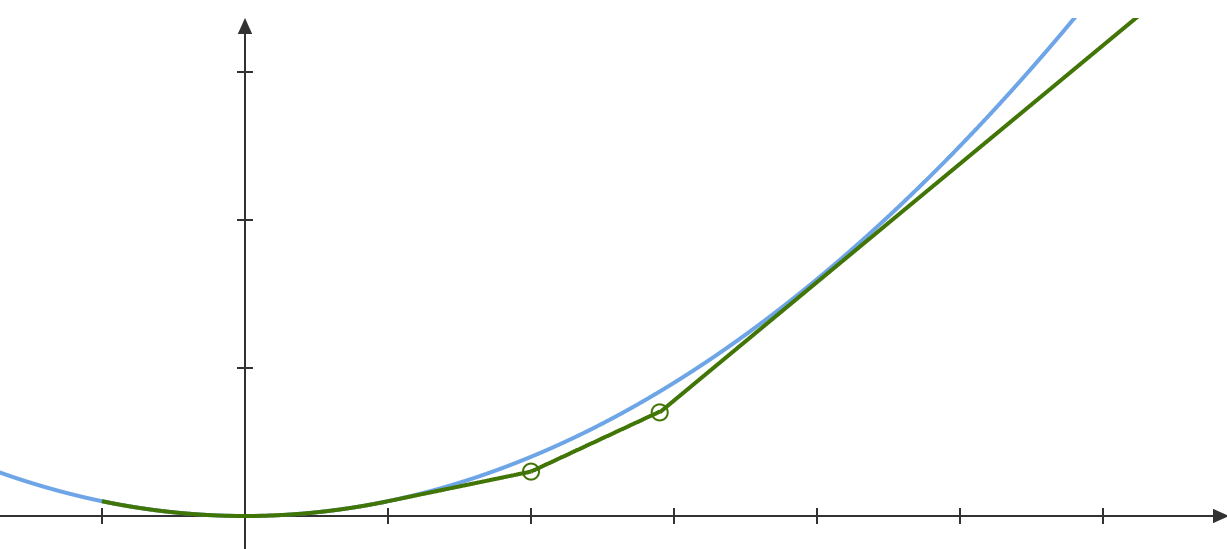}}
    \caption{A stochastic mechanism}
    \label{fig:third}
\end{subfigure}
\hfill 
\begin{subfigure}{0.45\textwidth}
    \includegraphics[width=\textwidth]{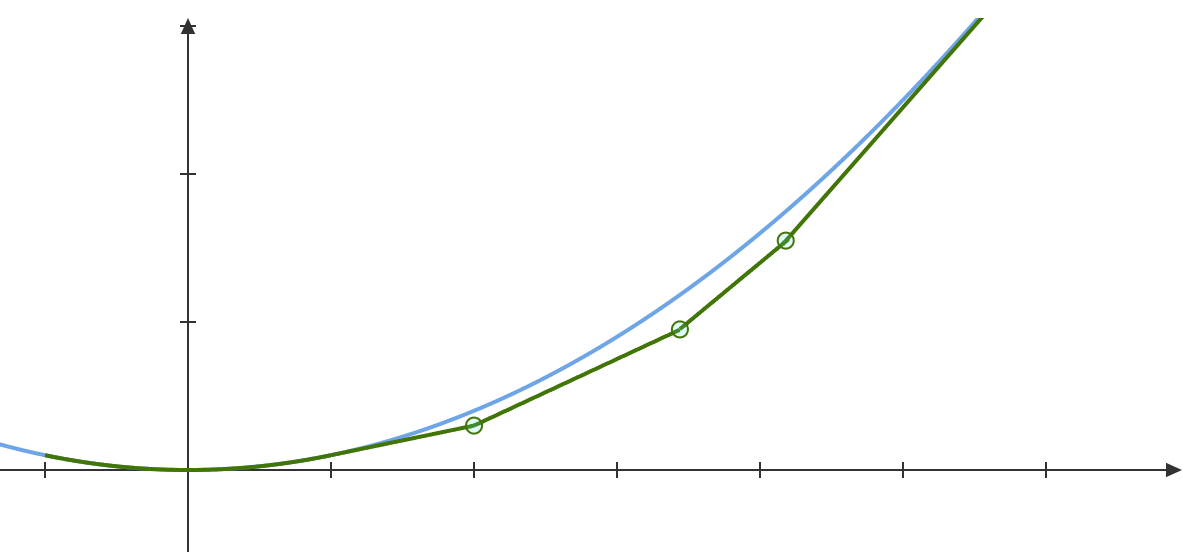}
    \caption{A stochastic mechanism with two adjacent stochastic actions}
    \label{fig:fourth}
\end{subfigure}
        
\caption{Examples of indirect utilities. The blue curves show the function $h$ for one-dimensional types and quadratic payoffs (i.e., assuming $b(a)=-\frac{a^2}{2}$). The green curves show indirect utilities corresponding to incentive-compatible mechanisms.} 
\label{fig:example_feasible_mechanisms}
\end{figure}

%-----------------------------------------------------------------
\section{Characterizing optimal mechanisms}
%-----------------------------------------------------------------

We characterize the optimal mechanisms in this section. To do so, we first formulate the principal's problem in terms of indirect utilities (\autoref{sec:formulating_problem}). We then state the main characterization of optimal mechanisms in \autoref{sec:optimal_mechanisms} and illustrate the result for particular mechanisms. Finally, we outline the proof of the main result in \autoref{sec:proof_sketch}.

\subsection{Formulating the principal's problem}
\label{sec:formulating_problem}

Consider an indirect utility  $U$ that corresponds to some incentive-compatible mechanism. In general, there are many incentive-compatible mechanisms that induce the same indirect utility; however, all such mechanism induce the same payoff for the principal. To see this, let $m$ be an incentive-compatible mechanism with corresponding indirect utility  $U$. Using $\nabla U(s)=\E[m(s)]$ (by an Envelope theorem) and $U(s) = \nabla U(s)\cdot s + b(m(s))$, the principal's payoff from mechanism $m$ in state $s$ is completely determined by $U$:
\begin{align*}
    \E[m(s)]\cdot g(s) + \kappa b(m(s)) = \nabla U(s) \cdot [g(s)- \kappa s]+ \kappa U(s). 
\end{align*}
This observation implies that the principal's payoff is a linear function of $U$. Therefore, a solution to the principal's problem can be found at an extreme point of the feasible set.
Returning to \autoref{fig:example_feasible_mechanisms}, it is easy to see that the indirect utilities in Figures \ref{fig:first}--\ref{fig:third} are extremal in that they cannot be written as a nontrivial convex combination of two feasible indirect utilities. In contrast, the indirect utility in \autoref{fig:fourth} can be written as such a convex combination. This implies that whenever this mechanism is optimal, there is another (and simpler) mechanism which is also optimal. Intuively, one can write this indirect utility as a convex combination because two adjacent regions obtain distinct stochastic actions. This insight shows how, without loss of optimality, one can restrict attention to a smaller class of mechanism.\footnote{\cite{kleiner21} develop this point more formally in the context of one-dimensional types/actions and quadratic utilities and characterize the set of extremal mechanisms. Formulating the problem in terms of indirect utilities and using our \autoref{lemma:feasible_set}, one can obtain this characterization more directly. It would be interesting to extend this characterization of extremal mechanisms to the multidimensional setting.} For multidimensional settings, analogous arguments show that many complicated mechanisms are not extremal and therefore the principal need not consider these mechanisms.

As is standard in multidimensional mechanism design \parencite[see, for example,][]{rochet1998ironing}, we can use the divergence theorem to reformulate the objective function as follows:
\begin{align*}
    &\int \big[ \kappa U(s)+ \nabla U(s)\cdot [g(s)-\kappa s] \big] \,\mathrm dF(s)\\
    =& \int U(s) \big[\kappa  f(s) - \diver[(g(s)-\kappa s)f(s)] \big] \,\mathrm ds + \int_{\bd S} U(s)[g(s)-\kappa s]f(s) \cdot \n_S(s) \,\mathrm d\H(s),
\end{align*}
where $\div$ denotes the divergence of a function, for any set $A$, $\bd A$ denotes the boundary of $A$, $\H$ denotes the $n-1$-dimensional Hausdorff measure on the boundary of $S$, and $\n_S(s)$ denotes the outward normal vector to the convex set $S$ at $s\in \bd S$.

This allows us to write the principal's problem as\footnote{The existence of a maximizer follows from standard arguments.} 
\begin{align*}
    &\max_{U \text{ convex}} \int U(s) \,\mathrm d\mu(s)\\
    & \text{s.\,t. } U\le h,
\end{align*}
where the measure $\mu$ is defined by
\begin{align*}
 \mu(E) = \int_E \nu(s) \,\mathrm d\lambda(s) ,
\end{align*}
$\lambda$ is the Lebesgue measure on $S$ plus the Hausdorff measure on the boundary of $S$, and
\begin{align*}
\nu(s):= \begin{cases}
\kappa f(s) - \diver[(g(s)-\kappa s)f(s)]     &\text{ if } s\in\interior S \\
[g(s)-\kappa s]f(s) \cdot \n_S(s) &\text{ if } s\in\bd S.
\end{cases}
\end{align*}

Cearly, $\nu$ plays an important role in determining which mechanisms are optimal. Heuristically, $\nu(s)$ measures how much the prinicipal's payoff increases if the indirect utility of type $s$ is increased, but where types on the boundary get extra weight.

To illustrate $\nu$ and for later use, let use compute $\nu$ for a one-dimensional type space $S=[\underline{s},\overline{s}]$:
\begin{align}
\nu(s):= \begin{cases}
\kappa f(s) - [g'(s)-\kappa] f(s) -[g(s)- \kappa s]f'(s) &\text{ if } s\in (\underline{s},\overline{s}) \\
[g(\overline{s})-\kappa \overline{s}]f(\overline{s})  &\text{ if } s=\overline{s}\\
[\kappa \underline{s} - g(\underline{s})]f(\underline{s}) &\text{ if } s=\underline{s}.
\end{cases}\label{eq:nu_onedimensional}
\end{align}

\begin{example}
\label{ex:uniform}
Suppose $S=[-\frac{1}{2},\frac{1}{2}]^n$ and $F$ is the uniform distribution on $S$. Let us assume payoffs are quadratic and that $g(s)=\alpha s$ for some $\alpha\in (0,\kappa]$; this implies that the principal is biased towards the ex-ante optimal action $0$. In that case, $\nu$ simplifies to
\begin{align}
\nu(s):= \begin{cases}
\kappa + (\kappa - \alpha)n     &\text{ if } s\in\interior S \\
(\alpha-\kappa)s \cdot \n_S(s) &\text{ if } s\in\bd S. \label{eq:nu_uniform}
\end{cases}
\end{align}
\end{example}

\subsection{Optimal mechanisms}
\label{sec:optimal_mechanisms}

Given an indirect utility $U$, we let $\Q$ denote a coarsest partition of $\R^n$ such that $U$ is affine on each partition element. We denote by $\{\mu|_Q\}_{Q\in \Q}$ a conditional measure of $\mu$ given $Q$. \todo{Check $X$ versus $\R^n$.}

\begin{theorem}\label{thm:main_result}
Let $U$ be a feasible indirect utility. Then $U$ is optimal if for all $Q\in \Q$, $\mu|_Q(Q)\ge 0$ and $\mu|_Q\le_{cx} \delta_Q$, where $\delta_Q$ is a point mass of mass $\mu|_Q(Q)$ at $s$ if there is $s\in Q$ satisfying $U(s)=h(s)$ and $\delta_Q$ is the zero-measure otherwise.

Moreover, this condition is necessary for $U$ to be optimal if $U$ is differentiable $|\mu|$-almost everywhere.
\end{theorem}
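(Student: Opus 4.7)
The plan is to establish sufficiency by a direct decomposition along the partition $\Q$, and necessity by a perturbation argument that leverages the $|\mu|$-a.e.\ differentiability of $U$.

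For sufficiency, take any feasible indirect utility $V$ (convex, $V\le h$) and set $\Delta:=V-U$. On each $Q\in\Q$, since $U$ is affine on $Q$ and $V$ is convex, $\Delta|_Q$ is convex. If $s^*\in Q$ is a contact point with $U(s^*)=h(s^*)$, feasibility of $V$ gives $\Delta(s^*)=V(s^*)-h(s^*)\le 0$. Applying the convex-order hypothesis $\mu|_Q\le_{cx}\delta_Q$ to the convex function $\Delta|_Q$,
\[
\int_Q \Delta\,\mathrm d\mu|_Q \;\le\; \int_Q \Delta\,\mathrm d\delta_Q \;=\; \mu|_Q(Q)\,\Delta(s^*)\;\le\; 0,
\]
where the last inequality uses $\mu|_Q(Q)\ge 0$. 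If no contact point exists, $\delta_Q=0$ and the convex order gives $\int \Delta\,\mathrm d\mu|_Q\le 0$ directly. Disintegrating $\mu$ against $\Q$ and integrating over $Q$ yields $\int(V-U)\,\mathrm d\mu\le 0$, so $U$ is optimal.

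For necessity, assume $U$ is optimal and $|\mu|$-a.e.\ differentiable, and argue by contradiction. If the stated condition fails on some $Q^*\in\Q$, then either $\mu|_{Q^*}(Q^*)<0$, or there exists a convex function $\phi$ on $Q^*$ with $\int \phi\,\mathrm d\mu|_{Q^*}>\int \phi\,\mathrm d\delta_{Q^*}$. In either case, the goal is to construct a feasible perturbation $U_\varepsilon:=U+\varepsilon W$ that strictly raises the principal's objective, contradicting optimality. The natural candidate on $Q^*$ is $W=\phi-\ell$, where $\ell$ is the affine function determined by some $a\in\partial\phi(s^*)$, so that $W$ is nonnegative and convex with $W(s^*)=0$; outside $Q^*$, $W$ is extended (e.g., by the largest convex extension that vanishes at $\bd Q^*$) so that $U+\varepsilon W$ remains convex on $\R^n$. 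Because $U$ is $|\mu|$-a.e.\ differentiable, the contact set $\{U=h\}$ meets $Q^*$ in at most a $|\mu|$-null set (typically the single barycentric point $s^*$), so for sufficiently small $\varepsilon$ the inequality $U+\varepsilon W\le h$ is preserved, while $\int W\,\mathrm d\mu>0$ by the chosen failure of the convex order.

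\paragraph{Main obstacle.}
The hard step is the necessity direction: constructing $W$ so that $U+\varepsilon W$ is simultaneously (i) globally convex on $\R^n$, (ii) dominated by $h$, and (iii) has strictly positive $\mu$-integral. Condition (ii) is delicate near the contact point, where any upward perturbation risks violating $U\le h$; condition (i) requires matching the convex extension of $W$ across $\bd Q^*$ to the slopes of the neighboring affine pieces of $U$. The differentiability hypothesis is essential precisely because it confines $\{U=h\}$ to a $|\mu|$-null set and guarantees the local-to-global extendibility of $W$, turning a witness of local infeasibility into a genuine improving direction.
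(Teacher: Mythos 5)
Your sufficiency argument is correct and takes a genuinely different, more elementary route than the paper. The paper proves sufficiency by exhibiting a dual certificate: it aggregates the point masses into $\gamma:=\int \delta_{Q(s)}\,\mathrm d|\mu|(s)$, checks $\gamma\ge_{cx}\mu$ and the complementary slackness conditions, and invokes strong duality (\autoref{l:strong_duality}). You instead argue directly in the primal: for any feasible $V$, the difference $V-U$ is convex on each $Q$ (convex minus affine), the convex-order hypothesis applied to $V-U$ gives $\int (V-U)\,\mathrm d\mu|_Q\le \mu|_Q(Q)\,[V(s^*)-h(s^*)]\le 0$, and disintegration sums these up. This is essentially a formalization of the heuristic the paper gives after the theorem statement, and it buys you independence from the duality machinery for this direction; the paper's dual construction buys a certificate $\gamma$ that is reused in the necessity proof.

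The necessity direction, however, has a genuine gap: the perturbation $U_\varepsilon=U+\varepsilon W$ you propose cannot be made feasible in general, for reasons that are structural rather than technical. First, a nonnegative convex function on a compact convex $Q^*$ that vanishes on $\bd Q^*$ is identically zero (convex functions attain their maximum on the boundary), so the proposed extension device is internally inconsistent whenever $Q^*$ is bounded. Second, and more fundamentally, near a contact point $s^*$ with $U(s^*)=h(s^*)$ the slack $h-U$ vanishes to second order along $Q^*$ (since $h$ is strongly convex and $U$ is affine there), so $U+\varepsilon W\le h$ fails for \emph{every} $\varepsilon>0$ unless $W$ itself vanishes to at least second order at $s^*$; "sufficiently small $\varepsilon$" does not rescue this, and the $|\mu|$-a.e.\ differentiability of $U$ does not confine the contact set in the way you claim. (Its actual role in the paper is different: it ensures that the non-differentiability points of $U$, which lie in $\closure Q\cap\closure Q'$ for distinct partition elements and where necessarily $U<h$, form a $|\mu|$-null set.) Third, conditional measures $\mu|_Q$ are defined only up to a $|\mu|$-null collection of partition elements, so a failure on a single $Q^*$ is not even detectable; you would need failure on a positive-measure set of elements and an improving direction integrating against $\mu$ over all of them. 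Finally, global convexity of $U+\varepsilon W$ across $\bd Q^*$ fails wherever adjacent affine pieces of $U$ meet with vanishing gradient jump, and you assert rather than prove the extendibility.

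The paper avoids all of this by running necessity through duality: strong duality yields a dual-optimal positive $\gamma\ge_{cx}\mu$ satisfying complementary slackness; Strassen's theorem converts $\gamma+\mu^-\ge_{cx}\mu^+$ into a dilation $D_s$; strict Jensen plus \eqref{eq:cs2} forces each $D_s$ to be supported in $\closure Q(s)$; and the differentiability hypothesis, via $\gamma(B)=\mu^-(B)=0$ for the set $B$ of inter-element boundary points, lets one localize the convex-order domination to each $Q$. If you want to complete your route, you would essentially have to re-derive this localization, at which point the duality argument is the shorter path.
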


Two comments on the conditions in \autoref{thm:main_result} are in order. First, if there is $s\in Q$ such that $U(s)=h(s)$ then it is unique because $U$ is affine on $Q$ and $h$ is strictly convex. %\parencite[as the convex conjugate of the differentiable function $b$, see Theorem 4.1.2 in][]{hiriart2004}. 
Second, for the necessity result, observe that $U$ is differentiable Lebesgue-almost everywhere since it is a convex function. Since $|\mu|$ is absolutely continuous with respect to the Lebesgue measure on the interior of $S$, $U$ is differentiable $|\mu|$-almost everywhere if, for example, the density $f$ is zero on the boundary of $S$ or if $U$ is differentiable $\H$-almost everywhere on the boundary of $S$. In the one-dimensional case, this last condition can always be satisfied.

Why are the conditions in \autoref{thm:main_result} sufficient for $U$ to be optimal? Consider a partition element $Q\in \Q$ and suppose $\delta_Q$ is a point mass at $s^*$. Then any feasible indirect utility $V$ will be convex and lie below $U$ at $s^*$. Also, $\mu|_Q\le_{cx} \delta_Q$ implies $\int V(s) \,\mathrm d\mu|_Q(s) \le \int V(s) \,\mathrm d\delta_Q(s)$. Moreover, if $a$ is an affine function that coincides with $V$ at the barycenter of $\mu|_Q$ then we get $\int V(s) \,\mathrm d\mu|_Q(s) \le \int a(s) \,\mathrm d\mu|_Q(s)$. 
Since $U$ restricted to $Q$ is affine, lies above $V$ at $s^*$, and $\mu|Q(Q)\ge 0$, this implies that conditional on the type belonging to $Q$, the principal's expected payoff under $U$ is higher than under $V$. And if $\delta_Q$ is the zero measure then $V$ might lie above $U$ but the same conclusion follows since $\mu|_Q(Q)=0$ and conditional on $Q$, adding a constant to the indirect utility does not change the principal's payoff. The conclusion that these conditions are also essentially necessary shows that the problem can in some sense be decomposed: whenever the principal can improve $U$ conditional on $Q$, she can extend this improved version to a feasible indirect utility that yields unconditionally a higher payoff.

A particularly simple mechanism is if the principal delegates the decision to the agent, potentially restricting the agents action to belong to some set $A$. Note that any deterministic mechanism can be implemented as an indirect mechanism in this way.
For a closed set $A\subseteq S$, we say that \emph{delegating to $A$} is optimal if an optimal mechanism takes the form that any type in $A$ gets her first-best action, and any other type gets her most preferred action among the first-best actions of types in $A$. For example, if $n=1$ and $A=[s_1,s_2]$ then delegating to $A$ is optimal if there is an optimal mechanism in which any type below $s_1$ gets the first-best action of type $s_1$, any type in $[s_1,s_2]$ gets her first best action, and any type above $s_2$ gets the first-best action of type $s_2$. In the following we will specialize \autoref{thm:main_result} and discuss under what conditions such a mechanism is optimal.

We can simplify the conditions in \autoref{thm:main_result} by recalling that the convex order has a simple structure for one-dimensional spaces. A cdf $H_1$ on a one-dimensional interval $[x,y]$ dominates a cdf $H_2$ in the convex order if and only if $H_2$ \emph{majorizes} $H_1$:
\[ \int_s^y H_1(z) \,\mathrm dz \le \int_s^y H_2(z) \,\mathrm dz \] 
for all $s\in [x,y]$ with equality for $s=x$ \parencite[][Theorem 3.A.1]{shaked2007stochastic}. This observation simplifies the characterization in \autoref{thm:main_result} whenever $U$ is affine on at most one-dimensional sets. As we will see, this is useful even if the type space is multidimensional. To illustrate the simpler conditions, we first consider when interval delegation is optimal with one-dimensional types \parencite[for earlier characterizations, see][]{AM:08,AB:13}.

\begin{corollary}\label{cor:onedimension}
 Suppose $n=1$ and $s_1,s_2\in S$ with $s_1<s_2$. Delegating to the interval $[s_1,s_2]$ is optimal if and only if
\begin{enumerate}[label=(\roman*)]
\item $\nu(s)\ge 0$ for all $s\in[s_1,s_2]$,
\item $\int_s^{\overline{s}} (x-s)\nu(x) \,\mathrm d \lambda(x|x\ge s_2)\le 0$ for all $s\ge s_2$ with equality for $s=s_2$, and
\item $\int_{\underline{s}}^{{s}} (s-x)\nu(x) \,\mathrm d \lambda(x|x\le s_1)\le 0$ for all $s\le s_1$ with equality for $s=s_1$.
\end{enumerate}   
 \end{corollary}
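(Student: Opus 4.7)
The plan is to deduce the corollary by specializing \autoref{thm:main_result} to the indirect utility $U_{*}$ associated with interval delegation on $[s_{1},s_{2}]$, and then translating the abstract convex-order condition on each partition element into a scalar majorization statement via Theorem 3.A.1 of Shaked and Shanthikumar. First I identify $U_{*}$: on $[s_{1},s_{2}]$ it equals $h$, on $[\underline s,s_{1}]$ it is the affine tangent to $h$ at $s_{1}$, and on $[s_{2},\overline s]$ it is the affine tangent to $h$ at $s_{2}$. Convexity of $h$ gives $U_{*}\le h$ and $U_{*}$ convex, so by \autoref{lemma:feasible_set}, $U_{*}$ is feasible. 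Since $b$ is strictly concave, $h$ is strictly convex, and therefore $U_{*}$ is affine on no non-degenerate subinterval of $(s_{1},s_{2})$. Hence the coarsest partition is
\[\mathcal Q=\{Q_{L},Q_{R}\}\cup\{\{s\}:s\in(s_{1},s_{2})\},\quad Q_{L}:=[\underline s,s_{1}],\ Q_{R}:=[s_{2},\overline s].\]

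Next I check the conditions of \autoref{thm:main_result} on each piece. For $Q_{R}$, $U_{*}$ is affine and satisfies $U_{*}(s_{2})=h(s_{2})$ with $U_{*}<h$ on $(s_{2},\overline s]$; hence $\delta_{Q_{R}}$ is the point mass at $s_{2}$ of total mass $\mu(Q_{R})$. By the cited Shaked--Shanthikumar result, $\mu|_{Q_{R}}\le_{cx}\delta_{Q_{R}}$ is equivalent to equal total mass, equal first moments, and a cumulative-integral inequality. The first-moment equality is exactly the equality at $s=s_{2}$ in (ii), and the cumulative inequalities for $s>s_{2}$ are the displayed ones; moreover, a Taylor expansion of the cumulative inequality at $s_{2}$ yields $\mu(Q_{R})\ge 0$, so the mass condition is subsumed. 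The analogous argument on $Q_{L}$ gives condition (iii). For each singleton $\{s\}\subset(s_{1},s_{2})$, $U_{*}(s)=h(s)$, so under the natural disintegration $\mu|_{\{s\}}=\nu(s)\delta_{s}$ one also has $\delta_{\{s\}}=\nu(s)\delta_{s}$; the convex-order inequality then holds with equality, while the mass condition $\mu|_{\{s\}}(\{s\})\ge 0$ becomes precisely $\nu(s)\ge 0$, i.e.\ condition (i). Sufficiency follows. Necessity is the second half of \autoref{thm:main_result}, which applies because $U_{*}$ is $C^{1}$ and hence differentiable $|\mu|$-almost everywhere.

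The main obstacle is the correct reading of \autoref{thm:main_result} on the continuous family of singleton partition elements in $(s_{1},s_{2})$: the mass condition $\mu|_Q(Q)\ge 0$ must be interpreted via the disintegration with Lebesgue base measure on the partition, so that $\mu|_{\{s\}}$ carries ``mass'' $\nu(s)$ rather than the literal $\mu(\{s\})=0$. With this reading, the apparently vacuous condition becomes the pointwise density bound (i). As a direct sanity check of its necessity, if $\nu<0$ on some subinterval $(c,d)\subset(s_{1},s_{2})$, then
\[V(s):=\begin{cases}U_{*}(s) & s\notin(c,d),\\ \max\{h(c)+h'(c)(s-c),\,h(d)+h'(d)(s-d)\} & s\in(c,d),\end{cases}\]
is a convex function below $h$ (as the maximum of two tangents to $h$), coincides with $U_{*}$ outside $(c,d)$, and satisfies $(V-U_{*})\nu>0$ on $(c,d)$, so $V$ is feasible by \autoref{lemma:feasible_set} and strictly improves the objective, contradicting optimality of $U_{*}$.
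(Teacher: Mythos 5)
Your proposal is correct and follows essentially the same route as the paper's proof: it specializes \autoref{thm:main_result} to the partition $\{[\underline s,s_1],[s_2,\overline s]\}\cup\{\{s\}:s\in(s_1,s_2)\}$, translates $\mu|_Q\le_{cx}\delta_Q$ on the two tails into the majorization/first-moment conditions (ii)--(iii) via Theorem 3.A.1 of Shaked--Shanthikumar, recovers $\mu|_Q(Q)\ge 0$ from the behavior of the cumulative inequality at $s_2$ (resp.\ $s_1$), and reads the singleton mass condition as $\nu(s)\ge 0$. The added explicit perturbation verifying necessity of (i) is a harmless bonus not needed given the necessity half of \autoref{thm:main_result}.
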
 

\begin{proof}
Note that the partition $\Q$ induced by $U$ has elements $[\underline{s},s_1]$, $[s_2,\overline{s}]$, and $\{s\}$ for all $s\in(s_1,s_2)$. For all $s\in(s_1,s_2)$, $\nu(s)\ge 0$ is equivalent to $\mu|_Q(Q)\ge 0$ and $\mu|_Q\le_{cx} \delta_Q$ for $Q=\{s\}$.\footnote{For $s\in\{s_1,s_2\}$, if $s\in\interior S$ then $\nu(s)\ge 0$ follows because $\nu$ is continuous on the interior of $S$. And if $s\in\bd S$, there is $Q\in\Q$ with $Q\cap S=\{s\}$ and hence $\mu|_Q(Q)\ge 0$ implies $\nu(s)\ge0$.} Now consider $Q=[s_2,\overline{s}]$ and let $\lambda(x|x\ge s_2)$ denote the conditional distribution of $\lambda$ conditional on $x\ge s_2$. Since $\delta_Q$ is a point mass of mass $\mu|_Q(Q)$ at $s_2$, we can use majorization to rewrite $\mu|_Q\le_{cx} \delta_Q$ as 
\[ \int_s^{\overline{s}} \int_x^{\overline{s}} \nu(z) \,\mathrm d \lambda(z|z\ge s_2) \, \mathrm dx \le 0 \]
for all $s\ge s_2$ with equality for $s=s_2$. Integrating by parts, this becomes condition (ii). Moreover, since the derivative with respect to $s$ of the left-hand side evaluated at $s_2$ is negative, we obtain $\mu|_Q(Q)\ge 0$. The argument for $Q=[\underline{s},s_1]$ is analogous.
\end{proof}

\begin{figure}
\centering
\includegraphics[width=.5\textwidth]{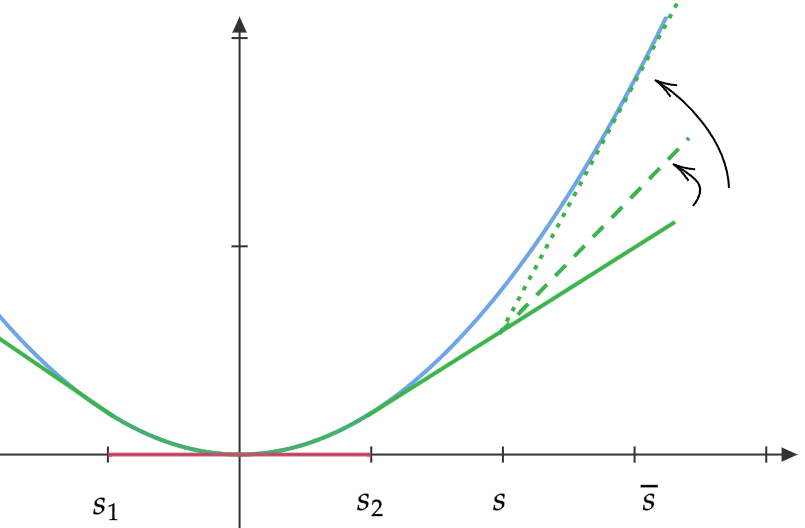}
\caption{Optimality of interval delegation}
\label{fig:economic_iterpretation_interval}
\end{figure}

\autoref{fig:economic_iterpretation_interval} illustrates condition (ii). Suppose that starting with interval delegation (represented by the solid indirect utility), the principal changes the mechanism and assigns a lottery with expected value strictly above $\nabla U(s_2)$ to all types above $s$. This tilts the indirect utility starting at $s$ upwards (see the dashed indirect utility) and therefore increases the indirect utility for every type $x\ge s$ in proportion to $x-s$. The change in the principal's expected payoff is therefore proportional to $\int_s^{\overline{s}} (x-s)\nu(x) \,\mathrm d \lambda(x|x\ge s_2)$. Consequently, condition (ii) ensures that such changes are not profitable. Equality for $s=s_2$ implies, in addition, that it would not be profitable to marginally reduce the action for all types above $s_2$ either.

Interestingly, the conditions identified in \autoref{cor:onedimension} are in our setting equivalent to the ones obtained in \citet[Proposition 2a]{AB:13}. This might initially be surprising since we characterize optimality of interval delegation in the class of stochastic mechanisms and \citeauthor{AB:13} characterize optimality in the class of deterministic mechanisms (and stochastic mechanisms can do strictly better in general). \autoref{fig:economic_iterpretation_interval} illustrates why the conditions are the same: Suppose the principal strictly benefits from deviating to the dashed indirect utility, which represents a stochastic mechanism. Since her payoff is linear in $U$, the arguments in the previous paragraph imply that she also benefits from deviating to the dotted indirect utility. Since the dotted linear utility corresponds to a deterministic mechanism, we conclude that conditions (ii) in (iii) in \autoref{cor:onedimension} are necessary for interval delegation to be optimal in the class of deterministic mechanisms (and necessity of condition (i) can be shown easily). Later, it will become clear that this equivalence is specific to the one-dimensional setting.

\begin{corollary}\label{cor:suff_one_dim}
If $n=1$ and $\{s\in S: \nu(s)\ge 0\}$ is an interval, then delegating to an interval is optimal.
\end{corollary}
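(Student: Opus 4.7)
The plan is to apply Corollary~\ref{cor:onedimension}. Writing $[a,b]:=\{s\in S:\nu(s)\geq 0\}$, the interval guaranteed by the hypothesis, I will exhibit $s_1^*\leq s_2^*$ in $[a,b]$ satisfying conditions (i)--(iii) of that corollary. The natural choice is to let $(s_1^*,s_2^*)$ maximize the principal's expected payoff $V(s_1,s_2)$ from delegating to $[s_1,s_2]$ over the compact set $\{(s_1,s_2):\underline{s}\leq s_1\leq s_2\leq\overline{s}\}$; a maximizer exists by continuity.

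First I would compute the gradient of $V$. The indirect utility $U$ associated with interval delegation on $[s_1,s_2]$ is tangent to $h$ on $[\underline{s},s_1]$ and $[s_2,\overline{s}]$ and coincides with $h$ on $[s_1,s_2]$. Differentiating $V=\int U\,d\mu$ under the integral, the boundary contributions at $s_1$ and $s_2$ cancel by the tangency of $U$ to $h$, and one obtains
\[
\frac{\partial V}{\partial s_1}=-h''(s_1)\,\Phi(s_1),\qquad \frac{\partial V}{\partial s_2}=h''(s_2)\,\Psi(s_2),
\]
where $\Phi,\Psi$ are as in the proof of Corollary~\ref{cor:onedimension} and $h''>0$ by strict convexity. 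Thus the first-order conditions at an interior maximizer give precisely the equality parts of (ii) and (iii).

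To obtain condition (i) I would show $[s_1^*,s_2^*]\subseteq[a,b]$. The hypothesis pins down the signs of $M(s):=\int_{[\underline{s},s]}\nu\,d\lambda$ and $N(s):=\int_{[s,\overline{s}]}\nu\,d\lambda$: $M$ is non-increasing on $[\underline{s},a]$ with $M(\underline{s})=\nu(\underline{s})\leq 0$ when $a>\underline{s}$, while $N$ is non-decreasing on $[b,\overline{s}]$ with $N(\overline{s})=\nu(\overline{s})\leq 0$ when $b<\overline{s}$. It follows that $\Phi<0$ on $(\underline{s},a)$ and $\Psi<0$ on $(b,\overline{s})$; hence $\partial V/\partial s_1>0$ throughout $(\underline{s},a)$ and $\partial V/\partial s_2<0$ throughout $(b,\overline{s})$, forcing $s_1^*\geq a$ and $s_2^*\leq b$.

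Finally, the inequality parts of (ii), (iii) come from the same sign analysis inside $[a,b]$. A direct computation using the divergence-theorem expression for $\nu$ gives $\int_S\nu\,d\lambda=\kappa>0$, so on $[a,b]$, $N$ is non-increasing from $N(a)\geq\kappa$ down to $N(b)\leq 0$, vanishing exactly once at some $c\in[a,b]$; consequently $\Psi$ is unimodal on $[a,\overline{s}]$ with its unique minimum at $c$, and combined with $\Psi(s_2^*)=\Psi(\overline{s})=0$ this forces $\Psi\leq 0$ on $[s_2^*,\overline{s}]$. The argument for $\Phi$ on $[\underline{s},s_1^*]$ is symmetric. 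The main obstacle I anticipate is the careful treatment of corner maximizers ($s_1^*=\underline{s}$, $s_2^*=\overline{s}$, or $s_1^*=s_2^*$), where the first-order condition becomes a boundary inequality and the corresponding condition in Corollary~\ref{cor:onedimension} reduces to a vacuous or degenerate direct check; once these cases are verified, Corollary~\ref{cor:onedimension} applies and yields the claim.
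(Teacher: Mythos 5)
Your argument is correct in its essentials but follows a genuinely different route from the paper. The paper works with the \emph{necessity} direction of \autoref{thm:main_result}: it takes an arbitrary optimal indirect utility and observes that every pooling region must contain a type with $\nu\ge 0$ (since $\mu|_Q(Q)\ge 0$) and a type with $\nu\le 0$ (since a point mass cannot dominate a distinct positive measure in the convex order); because $\{\nu\ge 0\}$ is an interval, there can be at most two pooling regions, one at each end of $S$, which is exactly the structure of interval delegation. You instead use the \emph{sufficiency} direction of \autoref{cor:onedimension}: optimize over intervals, read the first-order conditions as the equality parts of (ii)--(iii), and verify the inequality parts by a sign analysis of the antiderivatives of $\nu$. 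Your key computations check out --- in particular $\int_S \nu\,\mathrm d\lambda=\kappa>0$ is correct and is what forces $N$ to cross zero exactly once on $[a,b]$, and the resulting unimodality of $\Psi$ together with $\Psi(s_2^*)=\Psi(\overline{s})=0$ does yield condition (ii). Your approach is more constructive (it identifies the optimal interval as the payoff-maximizing one, which the paper's argument does not), at the cost of more bookkeeping. The one point that is not merely a technicality is the degenerate maximizer $s_1^*=s_2^*$: this case genuinely occurs (e.g.\ when $\{\nu\ge0\}$ is a small interval and full pooling on a single action is optimal), \autoref{cor:onedimension} is stated only for $s_1<s_2$ and so cannot be invoked, and you would need a separate verification via \autoref{thm:main_result} with a single pooling region $Q$; the paper's counting argument covers this case automatically. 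You flag this, but as written it remains an open step rather than a verified one.
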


The key insight for this result is that any pooling region (i.e., any $Q$ such that $Q\cap S$ is not a singleton) must contain types $s$ with $\nu(s)\ge 0$ (since $\mu|_Q(Q)\ge 0$) and types $s$ with $\nu(s)\le 0$ (since no point measure $\delta_Q$ can dominate a distinct positive measure in the convex order). If $\nu$ is positive on an interval, it follows that there can be at most two pooling regions. A simple argument then shows that delegating to an interval is an optimal mechanism.

\autoref{cor:suff_one_dim} extends Proposition 2(a) in \citet{ABF:18}, which in our notation requires $\nu$ to be positive on $(\underline{s},\overline{s})$. An simple implication of our result is the following, which can be useful for applications.

\begin{corollary}\label{cor:log_concave}
Suppose the type space is one-dimensional (i.e., $n=1$), $\kappa=1$, and the agent has a constant bias (i.e., $g(s)=s+\beta$ for some $\beta\in\R$). If $f$ is logconcave then delegating to an interval is optimal.
\end{corollary}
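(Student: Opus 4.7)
The plan is to apply \autoref{cor:suff_one_dim}: it suffices to show that $\{s\in S:\nu(s)\ge 0\}$ is an interval. Under the hypotheses of the corollary, equation \eqref{eq:nu_onedimensional} simplifies dramatically. With $\kappa=1$ and $g(s)=s+\beta$, one has $g'(s)-\kappa=0$ and $g(s)-\kappa s=\beta$, so on the interior
\[\nu(s)=f(s)-\beta f'(s),\]
while at the boundary $\nu(\overline{s})=\beta f(\overline{s})$ and $\nu(\underline{s})=-\beta f(\underline{s})$.

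Assuming $f>0$ on $\interior S$, the interior condition $\nu(s)\ge 0$ can be rewritten as $\beta\,(\log f)'(s)\le 1$. This is the key observation: log-concavity of $f$ means that $(\log f)'$ is nonincreasing on $\interior S$, and therefore the function $s\mapsto \beta\,(\log f)'(s)$ is monotone on $\interior S$ (nonincreasing when $\beta\ge0$ and nondecreasing when $\beta\le0$). Hence the interior sublevel set $\{s\in\interior S:\beta(\log f)'(s)\le 1\}$ is an interval.

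It remains to handle the two endpoints. If $\beta>0$, then $\nu(\overline{s})\ge 0$ automatically and $\nu(\underline{s})\le 0$; since $\beta(\log f)'$ is nonincreasing, the interior interval on which $\nu\ge 0$ is of the form $[s^{*},\overline{s})$, so glueing in the right endpoint yields the interval $[s^{*},\overline{s}]$ (the left endpoint is included only in the knife-edge case $f(\underline{s})=0$, which still gives an interval). If $\beta<0$, the symmetric argument yields an interval of the form $[\underline{s},s^{*}]$. If $\beta=0$, then $\nu=f\ge 0$ throughout $S$, so the set is all of $S$. In every case $\{\nu\ge 0\}$ is an interval, so \autoref{cor:suff_one_dim} applies.

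The only subtlety I anticipate is the treatment of points at which $f$ may vanish on $S$ (where $(\log f)'$ is not defined). This is not really an obstacle: the standard definition of log-concavity makes $f$ strictly positive on the interior of its support or zero there, so one can restrict the monotonicity argument to $\{f>0\}$ and then verify that the endpoint values of $\nu$ do not destroy the interval structure. I would write the argument assuming $f>0$ on $\interior S$ and remark that the degenerate cases are handled identically.
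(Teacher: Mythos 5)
Your proposal is correct and is essentially the paper's own argument: compute $\nu(s)=f(s)\left[1-\beta\frac{f'(s)}{f(s)}\right]$ from \eqref{eq:nu_onedimensional}, use logconcavity of $f$ to make $1-\beta(\log f)'$ monotone (so $\nu$ is single-crossing on the interior), check the endpoint signs $\nu(\underline{s})=-\beta f(\underline{s})$ and $\nu(\overline{s})=\beta f(\overline{s})$, and invoke \autoref{cor:suff_one_dim}. The paper only spells out the case $\beta\ge 0$ and leaves the rest implicit, so your explicit treatment of both signs of $\beta$ (and your remark on the degenerate boundary case $f=0$) is, if anything, slightly more complete.
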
\todo{Mention comparative statics result implied by this result in combination with Holmstrom's?}

As another illustration, let us return to \autoref{ex:uniform} specializing to a one-dimensional type space.

\addtocounter{example}{-1}
\begin{example}[continued]
For $n=1$ and $\kappa=1$, the objective function simplifies to $\nu(s) =2\kappa - \alpha$ for $ s\in (\underline{s},\overline{s})$, $\nu(\underline{s})=(\kappa - \alpha)\underline{s}$, and $\nu(\overline{s})=(\alpha-\kappa)\overline{s}$. Since $\nu$ is positive on an interval, \autoref{cor:suff_one_dim} implies that delegating to an interval is optimal, and it only remains to find the best interval. 

The optimal interval must satisfy Condition (ii) as an equality for $s=s_2$, which requires
\[ (2- \alpha) \left[\frac{1}{8} - \frac{1}{2} s_2^2 - s_2 \left(\frac{1}{2}-s_2\right)\right]=0, \]
and simple algebra yields $s_2=\frac{\alpha}{2- \alpha}$. Using symmetry, it follows that it is optimal to delegate to the interval $\left[-\frac{\alpha}{2- \alpha}, \frac{\alpha}{2- \alpha} \right]$.
\end{example}

For a one-dimensional type space, the approach used in \autoref{cor:onedimension} can be used to simplify the conditions in \autoref{thm:main_result} for any mechanism, not just interval delegation. More generally, this approach is useful even with multidimensional types. To see this, let $A$ be a closed and convex set and, for $s\in\bd A$, let $N_A(s)$ denote the normal cone to $A$ at $s$. With quadratic payoffs, if the principal delegates $A$ and $s\in \bd A$, then all types in $s+N_A(s)$ will choose action $s$. Moreover, if the boundary of $A$ is differentiable then $N_A(s)$ is a (one-dimensional) ray and we can again use majorization to simplify the convex dominance conditions in \autoref{thm:main_result}.

\begin{figure}
\centering
\includegraphics[width=.5\textwidth]{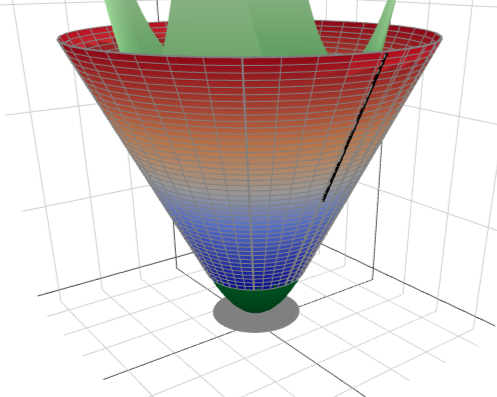}
\caption{Indirect utility for delegation to a convex set. }
\label{fig:convex_delegation}
\end{figure}

\begin{corollary}\label{cor:convex_delegation}
Suppose payoffs are quadratic and $A\subseteq S$ is closed, convex, has nonempty interior and a differentiable boundary. Delegating to $A$ is optimal if and only if
\begin{enumerate}[label=(\roman*)]
\item $\nu(s)\ge 0$ for all $s\in A$ and
\item for all $s\in \bd A$ and $z>0$, 
\[\int_z^{\infty} (x-z) \nu(s+x \n_A(s)) \,\mathrm d\lambda(s+x \n_A(s)|s+N_A(s))  \le 0\] 
with equality for $z=0$.
\end{enumerate}
\end{corollary}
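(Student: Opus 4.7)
My plan is to apply \autoref{thm:main_result} to the indirect utility $U$ induced by delegating to $A$, and to show that its abstract conditions translate cell-by-cell into (i) and (ii). Under quadratic payoffs, each type's first-best action is itself (so $h$ is a strictly convex quadratic), and $U(s)=\max_{a\in A}(a\cdot s+b(a))$ is attained at the unique projection $\pi_A(s)$ of $s$ onto $A$. Hence $U$ equals $h$ exactly on $A$, and the coarsest partition $\Q$ on which $U$ is affine consists of the singletons $\{s\}$ for $s\in\interior A$ together with the normal rays $s_0+N_A(s_0)$ for $s_0\in\bd A$. Differentiability of $\bd A$ collapses each $N_A(s_0)$ to the one-dimensional ray $\{x\,\n_A(s_0):x\ge 0\}$, so every partition cell has dimension at most one.

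For a singleton cell $Q=\{s\}$ with $s\in\interior A$, the measure $\mu|_Q$ is itself a point mass at $s$, so the convex-order condition of \autoref{thm:main_result} is automatic and $\mu|_Q(Q)\ge 0$ reduces to $\nu(s)\ge 0$; this gives (i) on $\interior A$, and the extension to $\bd A$ follows from continuity of $\nu$ on $\interior S$ together with the equality at $z=0$ derived below. For a ray cell $Q=s_0+N_A(s_0)$, parametrize by $x\ge 0$. Since $s_0\in A$ is the unique point of $Q$ where $U$ meets $h$, the reference measure $\delta_Q$ is a point mass of mass $\mu|_Q(Q)$ located at $x=0$. As $Q$ is one-dimensional, the convex order $\mu|_Q\le_{cx}\delta_Q$ can be rewritten by majorization, exactly as in the proof of \autoref{cor:onedimension}, as
\[\int_z^{\infty}\int_y^{\infty}\nu\bigl(s_0+x\,\n_A(s_0)\bigr)\,\mathrm d\lambda\bigl(s_0+x\,\n_A(s_0)\,\big|\,s_0+N_A(s_0)\bigr)\,\mathrm dy\le 0\qquad(z\ge 0),\]
with equality at $z=0$. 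Swapping the order of integration (Fubini) yields condition (ii); the equality at $z=0$ encodes the equal-mean condition intrinsic to convex order, and the derivative sign at $z=0$ delivers the mass condition $\mu|_Q(Q)\ge 0$.

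For the necessity direction, \autoref{thm:main_result} requires $U$ to be differentiable $|\mu|$-almost everywhere. Under quadratic $b$ one has $\nabla U=\pi_A$ on $\R^n$; since $\pi_A$ is continuous whenever $A$ is convex with differentiable boundary, $U$ is globally $C^1$ and the necessity assertion of \autoref{thm:main_result} applies without further hypothesis. The main obstacle I anticipate is the careful bookkeeping in the ray analysis: one has to verify that the conditional measure $\mu|_Q$ correctly aggregates the interior Lebesgue part of $\mu$ along the ray together with any contribution from the ray's intersection with $\bd S$, and that the majorization criterion used in \autoref{cor:onedimension} extends to the (signed) conditional measures appearing here in exactly the form needed.
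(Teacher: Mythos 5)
Your proposal is correct and follows essentially the same route as the paper's proof: identify the partition into interior singletons and one-dimensional normal rays, reduce the convex-order condition on each ray to majorization exactly as in \autoref{cor:onedimension}, and read off the mass condition from the behavior at $z=0$. Your extra observation that $\nabla U=\pi_A$ is continuous, so $U$ is globally $C^1$ and the differentiability hypothesis for the necessity part of \autoref{thm:main_result} holds automatically, is a detail the paper's proof leaves implicit and is a worthwhile addition.
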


The conditions in \autoref{cor:convex_delegation} closely resemble those in \autoref{cor:onedimension}. Indeed, Condition (i) in either case requires that $\nu$ is positive on the set of types that obtain their first-best payoffs, and Condition (ii) (and Conditions (ii) and (iii), respectively) impose that for each point on the boundary the analogous stochastic dominance condition holds.

The economic interpretation of Condition (ii) is analogous to how we interpreted Condition (ii) in \autoref{cor:onedimension}. This condition ensures that the principal does not benefit from marginally tilting the indirect utility along line segments that are orthogonal to the boundary of $A$, e.g., the solid line segment in \autoref{fig:convex_delegation}. Observe that there is a stochastic mechanism in which the indirect utility is increased only in a small neighborhood of the solid line segment (by \autoref{lemma:feasible_set}). On the other hand, there is no deterministic mechanism achieving this because for any deterministic action the indirect utility would have to increase significantly along the solid line segment (in order to reach the first-best payoff for some type) and convexity then requires that all types in a neighborhood of the line segment obtain higher indirect utilities. This indicates that our characterization relies in the multidimensional setting on stochastic mechanisms being feasible.

\addtocounter{example}{-1}
\begin{example}[continued]
Consider a two-dimensional example and recall that $F$ is the uniform distribution and $g(s)=\alpha s$ for some $\alpha\in[0,\kappa)$. We assume quadratic payoffs; in that case, the problem is separable across dimensions: the principal's optimal action in dimension 1 depends only on the first component of the state and is independent of the action in dimension 2.

Suppose first that there are two agents: For $i=1,2$, agent $i$ has private information about $s_i$ (but not $s_j$ for $j\neq i$) and cares only about the action and state in dimension $i$. It follows that the principal faces two independent delegation problems, and our earlier analysis implies that it is optimal to let each agent choose any action in $\left[ -\frac{\alpha}{2- \alpha},\frac{\alpha}{2- \alpha} \right]$. In effect, the agents' choice will be the action in the red square in \autoref{fig:optimal_bundling} that is closed to the realized state.

Now compare this to the situation where there is only one agent. This agent has private information about both dimensions of the state and cares about both dimensions of the action. How can the principal improve her expected payoff? Intuitively, she could offer the agent to take more extreme actions in one dimension if he moderates his action in the other dimension. How can the principal optimally bundle the two decision problems?

\autoref{cor:convex_delegation} provides insights into how to solve the problem: if one can find a $A$ satisfying the conditions stated there, delegating to this set will be an optimal mechanism. Since $\nu$ is positive on the interior of $S$ and strictly negative on the boundary of $S$, Condition (i) will be satisfied if $A\subseteq \interior S$ and Condition (ii) will be satisfied if, for every $s\in\bd A$, equality holds in Condition (ii) for $z=0$. This yields a second-order differential equation, whose solution describes the boundary of the optimal delegation set, see the blue curve in \autoref{fig:optimal_bundling} for an illustration.
\end{example}

% Like in the one-dimensional case, we can generalize this example and show how particular assumptions on $\nu$ ensure that relatively simple mechanisms are optimal.

% \begin{proposition}
% If $\nu$ is concave and delegation is valuable, then the optimal mechanism has no isolated actions.
% \end{proposition}\todo{Under what conditions are deterministic mechanisms/convex delegation sets optimal?}

% It will be interesting to explore further which assumptions on $\nu$ ensure that particular classes of mechanisms are optimal.

\begin{figure}
\centering
\includegraphics[width=.5\textwidth]{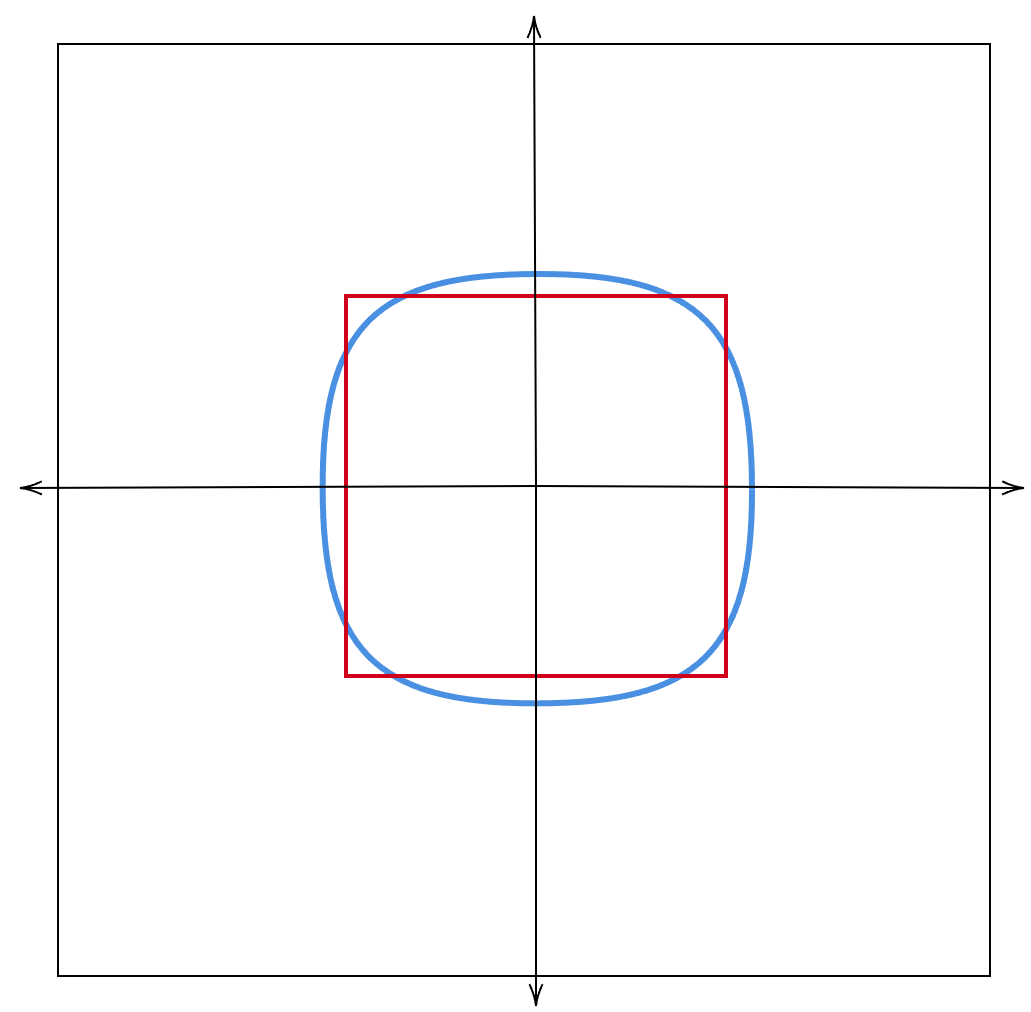}
\caption{Optimal bundling}
\label{fig:optimal_bundling}
\end{figure}

\subsection{Proof Sketch}
\label{sec:proof_sketch}

To proof \autoref{thm:main_result}, we use duality in linear programming. To formulate the dual program, it is more convenient to work with indirect utilities that are defined on a compact domain. But recall that it is not enough to only require that $U(s)\le h(s)$ for all $s\in S$. 
The following technical result ensures that we can restrict the indirect utilities to have a compact domain as long as this domain is chosen large enough.

\begin{lemma}\label{lemma:compact_domain}
There is a compact $X\subseteq\R^n$ such that the principal's problem can be written as $\max \{\int U \,\mathrm d\mu| U:X\rightarrow\R,\ U \text{ convex},\ U\le h\}$.
\end{lemma}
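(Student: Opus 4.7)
Plan:

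My plan is to establish both inequalities between the compact-domain problem and the original problem on $\R^n$. The easy direction is immediate: any convex $U:\R^n\to\R$ with $U\le h$ restricts to a convex $U:X\to\R$ with $U\le h$ on $X$; since (by the integration-by-parts representation in \autoref{sec:formulating_problem}) the signed measure $\mu$ is supported on $\bar S\subseteq X$, the integral $\int U\,\mathrm d\mu$ is unchanged. Hence the compact-domain supremum is at least the original supremum.

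For the reverse direction, I would show that for $X=B_R$ with $R$ large enough, every feasible $U:X\to\R$ extends to a feasible $\tilde U:\R^n\to\R$ with $\tilde U|_{\bar S}=U|_{\bar S}$, so the objectives agree. The extension I would try first is the sup of all globally valid affine minorants of $U$:
\[\tilde U(s) := \sup\{\,\ell(s) : \ell \text{ affine},\ \ell\le U \text{ on } X,\ \ell\le h \text{ on } \R^n\,\}.\]
By construction $\tilde U$ is convex, $\tilde U\le h$ on $\R^n$, and $\tilde U\le U$ on $X$. It remains to show $\tilde U(s)=U(s)$ for $s\in\bar S$, which requires exhibiting, for each such $s$, a $p\in\partial U(s)$ such that the affine function $\ell_{s,p}(y)=U(s)+p\cdot(y-s)$ lies below $h$ globally. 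Reusing the duality calculation from the proof of \autoref{lemma:feasible_set}, the global condition $\ell_{s,p}\le h$ is equivalent to $U(s)\le p\cdot s+b(p)$, i.e., $U^*(p)\ge -b(p)=h^*(p)$.

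To ensure such a $p$ exists, I would exploit the structural hypotheses on $b$: strict concavity and the coercivity condition $b(a)/\|a\|\to-\infty$ make $h=(-b)^*$ strictly convex and superlinear, and the Lipschitz gradient of $b$ makes $h$ strongly convex with $\nabla h:\R^n\to\R^n$ a homeomorphism. Choose $R$ large enough that (i) $\bar S\subseteq B_{R/2}$, and (ii) the range $\nabla h(B_R)$ contains every subgradient of every feasible $U$ attained at $\bar S$; a convexity/distance estimate bounds $\partial U(\bar S)$ in terms of $\sup_{B_R}h$ and $\inf_{B_R}U$, and for $R$ beyond a calibrating threshold determined by $h$'s growth, any $p\in\partial U(s)$ satisfies $p=\nabla h(y^*)$ for some $y^*\in B_R$, so the global minimum of $h(y)-p\cdot y$ is attained in $B_R$, giving $\inf_{\R^n}[h(y)-p\cdot y]=\inf_{B_R}[h(y)-p\cdot y]$; the latter dominates $U(s)-p\cdot s$ because $p\in\partial U(s)$ and $U\le h$ on $B_R$.

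The hardest step will be handling the case in which $U$ can be arbitrarily negative on $\bar S$, making the a priori subgradient bound depend on $U$. I would address this by either (a) restricting attention from the outset to $U$ bounded below by $-M$ for some $M$ large (arguing that taking $M\to\infty$ cannot improve the objective because $\mu$ has positive mass where the optimum is finite), or (b) falling back on a modification argument: replace $U$ on an annular shell $B_R\setminus B_{R/2}$ by $\max(U,\,h-C)$ for a constant $C>\sup_{\bar S}(h-U)$, producing a feasible $U'$ that agrees with $U$ on $\bar S$ and agrees with $h-C$ near $\bd B_R$, hence extends trivially by $h-C$ (convex, $\le h$) outside $B_R$. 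The verification that convexity is preserved at the interface, and that $R$ can be chosen uniformly (independent of $U$) using superlinearity of $h$, is the principal technical obstacle.
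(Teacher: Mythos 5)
There is a genuine gap in the reverse direction. Your plan hinges on the claim that for $R$ large enough \emph{every} feasible $U:B_R\to\R$ extends to a convex $\tilde U\le h$ on $\R^n$ with $\tilde U=U$ on $S$. No compact $X$ has this property, so the two problems never have matching feasible sets in the sense you need. Concretely, take quadratic payoffs, so $h(s)=\norm{s}^2/2$. For any $R$ with $S\subseteq B_R$ and any $p$ with $\norm{p}>R$, the affine function $\ell(y)=p\cdot y-\bigl(R\norm{p}-R^2/2\bigr)$ satisfies $\ell\le h$ on $B_R$ (the maximum of $p\cdot y-h(y)$ over $B_R$ is attained on $\bd B_R$ and equals $R\norm{p}-R^2/2$), yet $\ell(p)-h(p)=(\norm{p}-R)^2/2>0$. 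Since $S$ has nonempty interior, any convex function on $\R^n$ that agrees with $\ell$ on $S$ majorizes $\ell$ everywhere (take a supporting hyperplane at an interior point of $S$), so $\ell|_{B_R}$ is feasible for the compact problem but admits \emph{no} feasible extension. The same example shows that the uniform subgradient bound you hope for cannot exist: over the feasible set, $\inf_{B_R}U$ is unbounded below and $\partial U(S)$ is unbounded, no matter how $R$ is calibrated. Your fallback (b) also fails at the interface, since a feasible $U$ need not lie below $h-C$ near $\bd B_R$; and for the affine example no modification off $S$ can rescue feasibility of the extension.

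The lemma asserts equality of \emph{values}, not of feasible sets, and the non-extendable feasible points must be eliminated by an optimality argument---which is what the paper does. Take $U$ \emph{optimal} for the compact problem and let $\tilde U$ be the smallest convex extension of $U|_S$. If $\tilde U(y)>h(y)$ for some $y\notin B_r$, then $\tilde U(y)=U(s)+\nabla U(s)\cdot(y-s)$ for some $s\in S$, and strong convexity of $h$ forces $U(s)<h(s)-z(r)$ with $z(r)\to\infty$ as $r\to\infty$. Then either $U$ is uniformly very low on $S$, or $\nabla U$ (i.e., the expected action) is very large on a set of positive measure; by the superlinear decay of $b$, either case drives the principal's payoff below that of the ex-ante optimal constant action, contradicting optimality of $U$. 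Your option (a) gestures at the first horn of this dichotomy but omits the large-gradient horn and the explicit payoff comparison; without these the argument does not close.
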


Formally, we show that if $X$ is chosen large enough then for any solution to the above problem there is a corresponding solution to the original problem. For a convex function $U$ defined on $S$, we consider the smallest convex function defined on $\R^n$ that extends $U$. If this extension lies below $h$ on a large set $X$ then $h(y)<U(y)$ for some $y$ is possible only if $\norm{\nabla U(s)}$ is large for some $s\in S$, i.e., the expected action for some type is large. We show that this implies that the principal's expected payoff is low, contradicting that $U$ is a solution.

Now let $X$ be as in the above lemma and denote by $\U$ the set of convex continuous functions that map $X$ to $\R$ and by $\M_+$ the set of positive measures on $X$. We can formulate the principal's problem as follows (and call this formulation the primal problem):

\begin{align*}\label{eq:primal_problem}
    &\max_{U \in\U} \int U(s) \,\mathrm d\mu(s) \tag{P}\\
    & \text{s.\,t. } U\le h
\end{align*}

\paragraph{The dual problem}
We will show that the following problem is the dual problem:
\begin{align}\label{eq:dual_problem}
    &\inf_{\gamma \in \M_+} \int h(s) \,\mathrm d\gamma(s) \tag{D}\\
    & \text{s.\,t. }  \gamma \ge_{cx} \mu,\nonumber
\end{align}
where $\ge_{cx}$ denotes the convex order on the space of measures.

Note that $h$ is a convex function; therefore, if $\mu$ was a positive measure, this would be a trivial problem with solution $\gamma= \mu$. However, since $\mu$ is a signed measure and $\gamma$ has to be a positive measure, $\mu$ is not feasible in general.

It is easy to see that weak duality holds, that is, the value of the primal problem \eqref{eq:primal_problem} is always below the value of the dual problem \eqref{eq:dual_problem}. Indeed, for any feasible $U$ and $\gamma$,
\begin{align}\label{eq:weak_duality}
\int U(s) \,\mathrm d\mu(s)\underbrace{\le}_{(i)} \int U(s)\,\mathrm d \gamma(s)\underbrace{\le}_{(ii)} \int h(s) \,\mathrm d \gamma(s)
\end{align}
 since (i) $U$ is convex and $\mu\le_{cx} \gamma$ and (ii)  $\gamma$ is a positive measure and $U\le h$. The following result shows that strong duality holds, that is the optimal values of both problems are equal and the dual problem has a solution. 

\begin{lemma}[Strong duality]\label{l:strong_duality}
A feasible mechanism $U$ is optimal if and only if there exists a positive measure $\gamma\ge_{cx}\mu$ such that 
\begin{align}
    U(s)&=h(s) \text{ for $\gamma$-almost every\ $v$ } \label{eq:cs1} \\
    \int U(s) \,\mathrm d\mu(s) &= \int U(s) \,\mathrm d\gamma(s). \label{eq:cs2}
\end{align}
\end{lemma}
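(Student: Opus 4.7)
My plan is the following. The ``if'' direction is immediate from weak duality: given feasible $U$ and $\gamma\ge_{cx}\mu$ satisfying \eqref{eq:cs1} and \eqref{eq:cs2}, for any other feasible $V$ the chain
\[
\int V\,d\mu\ \le\ \int V\,d\gamma\ \le\ \int h\,d\gamma\ =\ \int U\,d\gamma\ =\ \int U\,d\mu
\]
(using $\mu\le_{cx}\gamma$ with $V$ convex, then $V\le h$ with $\gamma\ge 0$, then \eqref{eq:cs1}, then \eqref{eq:cs2}) establishes optimality of $U$ and simultaneously shows $\gamma$ attains the dual.

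For the ``only if'' direction, I would cast the primal as a conic linear program in the Banach space $C(X)$ (sup norm, with topological dual the signed Radon measures $\M(X)$): maximize $U\mapsto\int U\,d\mu$ over the convex cone $\U\subseteq C(X)$ subject to $U-h\le 0$ in the positive cone of $C(X)$. A direct Lagrangian calculation with multiplier $\gamma\in\M_+$ gives
\[
\sup_{U\in\U}\Bigl[\textstyle\int U\,d\mu+\int(h-U)\,d\gamma\Bigr]\ =\ \textstyle\int h\,d\gamma
\]
exactly when $\int U\,d(\mu-\gamma)\le 0$ for every convex $U$---that is, $\gamma\ge_{cx}\mu$---and equals $+\infty$ otherwise (using the positive-scaling cone structure of $\U$). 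This recovers \eqref{eq:dual_problem}. To upgrade weak duality to strong duality, I would invoke a Slater-type theorem for conic programs (e.g.\ Luenberger 1969, Ch.~8, or the perturbational framework of Rockafellar 1974): the primal admits the strictly feasible point $U_0\equiv(\min_X h)-1\in\U$ satisfying $h-U_0\ge 1$, which lies in the interior of the positive cone of $C(X)$. Together with the paper's observation that the primal is consistent with finite, attained value, this yields both $\max\eqref{eq:primal_problem}=\inf\eqref{eq:dual_problem}$ and attainment in the dual by some $\gamma^{*}\in\M_+$ with $\gamma^{*}\ge_{cx}\mu$.

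Given $U^{*}$ primal optimal and $\gamma^{*}$ dual optimal of common value, both inequalities in \eqref{eq:weak_duality} are equalities: the second gives $\int(h-U^{*})\,d\gamma^{*}=0$, and since $h-U^{*}\ge 0$ pointwise, this forces $U^{*}=h$ $\gamma^{*}$-a.e.\ (condition \eqref{eq:cs1}); the first gives \eqref{eq:cs2} directly. The main obstacle I anticipate is the careful bookkeeping around the conic (not linear) nature of $\U$ when identifying the polar cone $\U^{\circ}\subseteq\M(X)$ with $\{\eta:\eta\ge_{cx}0\}$ and verifying the correct form of Slater's condition in this setting. A secondary point, useful for a self-contained attainment argument, is that any feasible $\gamma$ satisfies $\gamma(X)=\mu(X)$ (testing $\gamma\ge_{cx}\mu$ against the convex functions $U\equiv\pm 1$), so the dual feasible set is norm-bounded and weak* closed in $\M(X)$, hence weak* compact by Banach--Alaoglu, while $\gamma\mapsto\int h\,d\gamma$ is weak* continuous since $h\in C(X)$.
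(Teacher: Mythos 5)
Your proposal is correct and follows essentially the same route as the paper: both cast the problem as a conic linear program in $\C(X)$ with dual variables in $\M(X)$, identify the polar cone of $\U$ with the convex order, verify Slater's condition via a strictly feasible convex function below $h$, and read off \eqref{eq:cs1}--\eqref{eq:cs2} from the equality chain in \eqref{eq:weak_duality}. Your explicit Slater point $U_0\equiv(\min_X h)-1$ and the Banach--Alaoglu argument for dual attainment are useful elaborations of steps the paper delegates to the cited conic-duality results, but they do not constitute a different approach.
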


This result is an analogue of a result in the revenue-maximization problem of a multiproduct monopolist \parencite[see Theorem 2 in][]{daskalakis-etal2017}. Our formulation of the delegation problem allows us to easily deduce strong duality. Note that there is a convex function $U$ such that $h(x)-U(x)>0$ for all $x\in X$. Therefore, Slater's condition is satisfied and standard results from linear programming imply that the dual problem has a solution and that the optimal solutions of the primal and dual problems achieve the same value. Since both inequalities in \eqref{eq:weak_duality} have to hold as equalities, \autoref{l:strong_duality} follows.

\paragraph{Proof idea for \autoref{thm:main_result}.} It is easy to show that the conditions in \autoref{thm:main_result} imply that $U$ is optimal: by aggregating the measures $\delta_Q$, one obtains a positive measure $\gamma$ satisfying the complementary slackness conditions \eqref{eq:cs1} and \eqref{eq:cs2} and $\gamma\ge_{cx}\mu$. \autoref{l:strong_duality} then implies that $U$ is optimal.

For the converse direction, suppose $U$ is optimal. By \autoref{l:strong_duality}, there is a positive measure $\gamma$ such that the complementary slackness conditions \eqref{eq:cs1} and \eqref{eq:cs2} hold and $\gamma \ge_{cx} \mu$. Letting $\mu^+$ ($\mu^-$) denote the positive (negative) part of $\mu$, this last condition is equivalent to $\gamma + \mu^- \ge_{cx} \mu^+$. Strassen's theorem then implies that $\gamma + \mu^-$ is a mean-preserving spread of $\mu^+$: one can obtain the measure $\gamma+\mu^-$ by taking, for every $s$, the mass $\mu^+$ puts on $s$ and spreading it according to a probability measure $D_s$ with expected value $s$. Since $U$ is convex, Jensen's inequality implies that $U(s)\le \int U(x)\,\mathrm dD_s(x)$ and equality holds only if $U$ is affine on the convex hull of the support of $D_s$. Since equality must hold by \eqref{eq:cs2}, we obtain that for all $Q\in\Q$ and $s\in Q$, the support of $D_s$ is contained in the closure of $Q$.
To simplify this informal discussion, suppose that for all $Q\in\Q$ and $s\in Q$, the support of $D_s$ is actually contained in $Q$ (and not just the closure of $Q$) and consider a partition element $Q$ of positive measure. Then the conditional measure $\gamma|_Q$ is positive (since $\gamma$ is positive) and satisfies $\gamma|_Q+\mu^-|_Q\ge_{cx} \mu^+|_Q$ (since the left-hand side is a mean-preserving spread of the right-hand side). Moreover, by \eqref{eq:cs2}  we get $U(s)=h(s)$ for every $s$ in the support of $\gamma|_Q$. Since $h$ is strictly convex and $U$ is affine on $Q$, there is at most one $s\in Q$ with $U(s)=h(s)$ and therefore $\gamma|_Q$ is a point mass at this $s$ or the zero measure. It follows that $\mu|_Q\le_{cx} \delta_Q$, where $\delta_Q$ is a point mass at $s$ or is the zero measure. The proof in the Appendix follows this sketch but uses the additional assumption in \autoref{thm:main_result} and additional arguments to deal with the case where the support of $D_s$ is a subset of the closure of $Q$ but not a subset of $Q$.\footnote{If $s$ lies in the closure of $Q$ and $Q'\neq Q$, then $U$ is not differentiable at $s$ and, therefore, $U(s)<h(s)$. If follows from \eqref{eq:cs1} that such points have measure zero under $\gamma$. The additional assumption ensures that such points also have measure zero under $\mu^+$ and $\mu^-$, and hence play no role.}%\hfill $\diamondsuit$

\begin{comment}

\section{Extensions}
\subsection{More general payoffs for the principal}

Suppose the principal's payoff from action $a$ in state $s$ s $u_P(a,s)$, where $u_P$ is continuous, concave in $a$, and the maximizing $a$ exists and is continuous in $s$. Let $\kappa:=\sup\{\kappa'\in \R: u_P(a,s) -\kappa' b(a) \text{ is concave}\}$.

Consider an incentive-compatible mechanism $m$ with corresponding indirect utility $U$. Then 
\[ \E[u_P(m(s),s)] \le u_P(\nabla U(s),s) - \kappa [U(s)-\nabla U(s) \cdot s + b(\nabla U(s))] \]
(since ?).
Moreover, equality holds whenever $m$ is a deterministic mechanism. 

\begin{align*}
&\max_{U\in\U} \int u_P(\nabla U(s),s) - \kappa [U(s)-\nabla U(s) \cdot s + b(\nabla U(s))] \,\mathrm dF(s) \\
&\text{s.t. } U\le h
\end{align*}

Suppose there is $\gamma\in\M_+$ such that $\int U\,\mathrm d \gamma=\int h\,\mathrm d \gamma$ 
and directional derivatives in all feasible directions are negative. Then $U$ is optimal.
\begin{align*}
1/\varepsilon \int u_P(\nabla U(s)+\varepsilon \nabla V(s),s) -u_P(\nabla U(s),s) - \kappa [\varepsilon V(s)-(\varepsilon \nabla V(s) )\cdot s + b(\nabla U(s)+\varepsilon \nabla V(s))] - b(\nabla U(s)) \,\mathrm dF(s) +\int   V(s) \,\mathrm d\gamma(s)
\end{align*}

\begin{align*}
 \int D_a u_P(\nabla U(s),s) \cdot \nabla V(s) - \kappa [ V(s)- \nabla V(s) \cdot s + \nabla b(\nabla U(s))\cdot \nabla V(s)] \,\mathrm dF(s) +\int   V(s) \,\mathrm d\gamma(s)\le 0
\end{align*}
for all $V\in\U$.
\end{comment}

\begin{singlespace}
    \addcontentsline{toc}{section}{References}
    \printbibliography
\end{singlespace} 
%\newpage
\appendix

\newpage

\section{Omitted Proofs}
\newcommand{\tilU}{\tilde{U}}

\begin{proof}[Proof of \autoref{lemma:compact_domain}]
Let $B_r$ denote a ball of radius $r$ around $0$ and let $U$ be a solution to $\max \{\int U \,\mathrm d\mu| U:B_r\rightarrow\R,\ U \text{ convex},\ U\le h\}$. We will show that $U$ can be extended to a solution to the principal's original problem. Let $\tilU$ denote the smallest convex extension to $\R^n$ of the restriction of $U$ to $S$ \parencite[see][]{dragomirescu92}.
If $\tilU$ is not feasible for the original problem then there is $y\not\in B_r$ such that $\tilU(y)>h(y)$ and there is $s\in S$ such that $\tilU(y)=U(s)+\nabla U(s)\cdot (y-s)$ (since $\tilU$ is the smallest convex extension).
Using strong convexity of $h$ (which follows since $b$ has Lipschitz-continuous gradients, see Theorem E.4.2.2 in \citet{hiriart2004}), one can show that $U(s)< h(s) - z(r)$, where $z(r)\rightarrow \infty$ as $r\rightarrow \infty$.\footnote{ Let $c'$ denote modulus of convexity of $h$. Then, for all $y\in B_r$ that lie on the line segment from $s$ to x, and all $t\in\partial (h-\tilU)(y)$, $h(s)-\tilU(s) \ge [h(y)-\tilU(y)] + t\cdot (s-y) + \frac{c'}{2} \norm{y-s}^2. $ Since $\tilU(s)=U(s)$ and the first two terms of the RHS are positive, the claim follows.}
Then either $U(s')\le h(s)-z(r)/2$ for all $s'\in S$ or, on a set of positive Lebesgue-measure, $\nabla U(s')\not\in B_{r/c}$ for some constant $c>0$ independent of $r,s$ and $s'$. Since $\lim_{\norm{a}\rightarrow \infty} b(a)=-\infty$ by assumption, this implies that in either case for $r$ large enough, the principals payoff from $U$ will be less than her payoff from taking the ex-ante optimal action. This contradicts our assumption that $U$ was optimal. Hence, any solution can be extended to a solution of the original problem.
\end{proof}

\begin{proof}[Proof of \autoref{l:strong_duality}]
    Let $\C(X)$ denote the vector space of continuous functions on $X$ with the supremum norm and recall that its dual space is the space of (Radon) measures on $X$, which we denote by $\M(X)$. Let $\V :=\{ g\in \C(X): \forall x\in V, g(x)\ge 0 \}$; the polar cones of $\U$ and $\V$ are defined by
\begin{align*}
    \U^* &:=\{ \gamma \in \M(X): \forall U\in\U, \int U \,\mathrm d\gamma \ge 0 \}\\
    \V^* &:=\{ \gamma \in \M(X): \forall g\in\V, \int g \,\mathrm d\gamma \ge 0 \}.
\end{align*}

The principal's problem can be written as $\max_{U \in \U} \int U\,\mathrm d\mu $ subject to $h-U\in \V$. 
This is a conical linear program and its dual is $\inf_{\gamma \in \V^*} \int h\,\mathrm d\gamma $ subject to $\mu-\gamma\in U^*$ \parencite[e.g.,][]{shapiro10}.
Since $\V^*=\M_+(X)$ by the Riesz representation theorem \parencite[][p.\,65]{dunford-schwartz88a} and $\mu-\gamma\in U^*$ is equivalent to $\mu\ge_{cx} \gamma$, \eqref{eq:dual_problem} is the dual problem. 

Since there is $U\in \U$ such that $h-U$ is in the interior of $\V$, Slater's condition is satisfied and standard results imply that strong duality holds \parencite[e.g.,][Proposition 2.8]{shapiro10}. Therefore, $U$ is optimal if and only if there is a positive measure $\gamma\ge_{cx} \mu$ such that $\int U \,\mathrm d\mu = \int h\, \mathrm d\gamma$, which implies the result.
\end{proof}

\begin{proof}[Proof of \autoref{thm:main_result}]
Given $s\in X$, we denote by $Q(s)$ the partition element of $\Q$ that contains $s$.

``$\Leftarrow$'': Let $\gamma := \int \delta_{Q(s)} \,\mathrm d |\mu|(s)$. Given the properties of $\mu|_Q$, we conclude that $\gamma\in\M_+$ and $\supp \gamma \subseteq \{s:U(s)=h(s)\}$. Moreover, for all $c\in\U$,
\begin{align*}
\int c(x)\, \mathrm d  \gamma(x) = \int \int c(x)\, \mathrm d \delta_{Q(s)}(x) \, \mathrm d|\mu|(s) \ge \int \int c(x) \, \mathrm d\mu|_{Q(s)} \, \mathrm d|\mu|(s) = \int c(x)\, \mathrm d\mu(x).
\end{align*}
and equality holds for $c\equiv U$ because (i) $U$ is affine on each $Q\in\Q$ and (ii) $\delta_Q\ge_{cx} \mu|_Q$ implies $\int a(x) \, \mathrm d\delta_Q = \int a(x) \, \mathrm d\mu|_Q$ for any affine function $a\in \C(X)$. %E.g., equation (3.5) in Alfsen, page 22.
Therefore, $\gamma$ is feasible for the dual problem and satisfies the complementary slackness conditions \eqref{eq:cs1} and \eqref{eq:cs2}. We conclude that $U$ is optimal.

``$\Rightarrow$'':
By Lemma \ref{l:strong_duality}, $U$ is optimal if and only if  there is $\gamma\in \M_+$ satisfying \eqref{eq:cs1}, \eqref{eq:cs2}, and $\gamma \ge_{cx}\mu$. Letting $\mu^+$ and $\mu^-$ denote the positive and negative parts of $\mu$, respectively, the last condition becomes $\gamma + \mu^- \ge_{cx} \mu^+$. Since both sides of the inequality are positive measures, Strassen's theorem \parencite[see, for example,][p. 93-94]{phelps01} implies that there is a dilation $D_s$ (that is, for each $s$, $D_s$ is a probability measure with barycenter $s$) satisfying $\gamma+\mu^- = \int D_s \,\mathrm d\mu^+(s)$.

Let $\mu|_Q$ be a (regular, proper) system of conditional measures (such conditional measures exist by Example 10.4.11 in \cite{bogachev07b}), which by definition satisfies
\[ \int_X c(s) \,\mathrm d\mu(s) = \int_X \int_X c(y) \,\mathrm d\mu|_{Q(s)}(y) \,\mathrm d |\mu|(s) \]
for all $c\in\C(X)$. Letting $\alpha_Q := \int D_s \,\mathrm d\mu|_Q^+ (s) - \mu|_Q^-$, we claim that there is $\Q'\subseteq \Q$ such that $\Q'$ has $|\mu|$-measure 0 
and, for all $Q\in\Q\setminus \Q'$, $\alpha_Q$ is a positive measure that has support on $Q\cap\{s:U(s)=h(s)\}$. 

Before proving this claim, we show that it implies the result: From the definition of $\alpha_Q$ it follows that if $\alpha_Q$ is a positive measure then $\mu|_Q(Q)\ge \alpha_Q(Q)\ge 0$. Also, $\alpha_Q\ge_{cx} \mu|_Q$ since $D_s$ is a dilation; therefore, if $\alpha_Q$ has support in $Q\cap \{s:U(s)=h(s)\}$  then $\delta_Q\ge_{cx} \mu|_Q$, where $\delta_Q$ is a point mass at $Q\cap \{s:U(s)=h(s)\}$ or the zero measure if $U(s)<h(s)$ for all $s\in Q$. For each $Q\in \Q'$, we let $\mu|_Q$ be the zero measure and observe that $\mu|_Q$ is still a conditional measure for $\mu$ since $\Q'$ has measure 0. This proves the result. 

First, suppose there is $\Q'\subseteq \Q$ with strictly positive $|\mu|$-measure such that, for all $Q'\in\Q'$, the support of $\alpha_{Q'}$ is not a subset of the closure of $Q$. Fix arbitrary $Q'\in\Q'$. Since the support of $\alpha_{Q'}$ is not contained in the closure of $Q'$, there is a set $A\subseteq Q'$ of strictly positive $\mu|_{Q'}^+$-measure such that, for all $x\in A$, the support of $D_x$ is not contained in the closure of $Q'$. Since Jensen's inequality is strict whenever the convex function is not affine on the convex hull of the support \parencite[][Proposition 16.C.1]{marshall79}, we obtain
 \[ \int U(s) \,\mathrm d\mu|_{Q'}^+(s)< \int \left[\int U(x) \,\mathrm dD_s(x)\right] \,\mathrm d\mu|_{Q'}^+(s). \] 
This yields
\begin{align*}
\int U(s)\,\mathrm d \mu(s) &= \int \left[\int U(x) \,\mathrm d\mu|_{Q(s)}^+(x) - \int U(x) \,\mathrm d\mu|_{Q(s)}^-(x)\right] \,\mathrm d |\mu|(s) \\ % This follows from the definition of conditional measures, (10.4.2) in Bogachev and Jordan decomposition
&< \int \left[\int \left( \int U(y) \,\mathrm d D_x(y)\right) \,\mathrm d\mu|_{Q(s)}^+(x) - \int U(x) \,\mathrm d\mu|_{Q(s)}^-(x)\right] \,\mathrm d |\mu|(s) \\
&= \int  \int U(y) \,\mathrm d D_s(y) \,\mathrm d \mu^+(s) - \int U(x) \,\mathrm d \mu^-(s) \\
&=\int U(s) \,\mathrm d \gamma(s), % See Remark 10.4.4 in Bogachev
%&= \int_{\bar Q} U(s) \,\mathrm d \gamma(s) % see definition of \gamma
\end{align*} 
which contradicts \eqref{eq:cs2}. We conclude that, except possibly on a $|\mu|$-Null set, the support of $\alpha_{Q}$ is a subset of the closure of $Q$.

Second, we show that $\alpha_{Q(s)}$ is a positive measure for $|\mu|$-almost every $s$: 
Let
\[ B := \{s\in X: s\in \closure Q\cap \closure Q' \text{ for }Q\neq Q'\}, \]
and note that for any $s\in B$, $U$ is not differentiable at $s$ and therefore $U(s)<h(s)$. Since $\supp \gamma\subseteq \{s: U(s)=h(s)\}$ by \eqref{eq:cs1}, $\gamma(B)=0$. Moreover, $\mu^-(B)=0$ because $U$ is continuously differentiable $|\mu|$-almost everywhere by assumption.
% \footnote{\textcolor{gray}{For general case, assume that any $s\in\partial S$ that lies in closure of some $Q\subseteq S$, $s\in Q'$, where $Q'\subseteq S$. Refine $B$ to require that $Q,Q'$ have nonempty intersection with $S$. 
%     Because $U$ is continuously differentiable Lebesgue-almost everywhere and, except on a $\mathcal{H}_{n-1}$-Null set of the boundary of $S$, the difference of any two elements of the subdifferential of $U$ at a boundary point of $S$ is orthogonal to $\partial S$ \parencite[by Theorem 2.3.4 on p.\,258 in][]{hiriart13}. 
%     Therefore, either $Q$ or $Q'$ lies in $n-1$-dimensional set (in the boundary of $S$?) and boundary of this $Q$ has dimension at most $n-2$.
%  and therefore $\mu^-({B})=0$.}  }  
Let $\mathcal{G}$ denote the $\sigma$-algebra generated by $\Q$ and note that the Borel $\sigma$-algebra on $X$ is generated by some countable algebra $\{A_1,A_2,...\}$ \parencite[][Propositions 3.1 and 3.3]{preston08}. 
For each $n$ and $G\in \mathcal{G}$, 
\begin{align*}
\int_G \alpha_{Q(s)}(A_n) \,\mathrm d|\mu|(s) = &\int_G \int_X D_{s'}(A_n)\,\mathrm d \mu|_{Q(s)}^+(s') - \mu|_{Q(s)}^-(A_n) \,\mathrm d|\mu|(s) \\
= &\int_G D_s(A_n)\,\mathrm d\mu^+(s)-\mu^-(A_n\cap G) \\
\ge &\left[\int_X D_s(A_n\cap G)\,\mathrm d\mu^+(s)-\mu^-(A_n\cap G)\right] - \int_{X\setminus G} D_s(A_n\cap G)\,\mathrm d\mu^+(s). 
\end{align*} % Note to myself: $\int \mu^+(A,Q(s)) d|\mu|(s) = \int \mu(A\cap X^+,Q(s)) d|\mu|(s)$ = \mu(A\cap X^+) = \mu^+(A)$, establishing the second equality.
The bracketed term equals $\gamma(A_n\cap G)$ and is therefore positive. The last term is zero since 
\[\int_{X\setminus G} D_s(A_n\cap G)\,\mathrm d\mu^+(s) \le \int D_s(A_n\cap G\cap B)\,\mathrm d\mu^+(s) - \mu^-(A_n\cap G\cap B) = \gamma(A_n\cap G\cap B)=0\]
(recall that $\gamma(B)=\mu^-(B)=0$). 
Since $\alpha_{Q(s)}(A_n)$ is $\mathcal{G}$-measurable in $s$, it follows that there is a $|\mu|$-Null set $Z_n$ such that $\alpha_{Q(s)}(A_n)\ge 0$ for all $s\in X\setminus Z_{n}$. Letting $Z:=\bigcup_{n=1}^{\infty} Z_n$, for all $s\in X\setminus Z$ and Borel sets $A$, $\alpha_{Q(s)}(A)\ge 0$ by Caratheodory's extension theorem \parencite[see][Theorem 1.5.6 and the comment afterward]{bogachev07a}. 

% Note that in the general case (where $f$ need not be 0 on boundary of $S$), $\alpha_Q$ is a positive measure on the interior of $S$: if $A_n\subseteq \interior S$ then $\mu^-(A_n \cap B)=0$ and the above argument holds.

Finally, if it is not true that for $|\mu|$-almost every $s$, the support of $\alpha_{Q(s)}$ is a subset of $\{s:U(s)=h(s)\}$, then $\int U \,\mathrm d \gamma < \int h \, \mathrm d \gamma$, contradicting \eqref{eq:cs1}.
Moreover, for any $s\in \closure Q\setminus Q$, $U$ is not differentiable at $s$ and therefore $U(s)<h(s)$ (because $U\le h$ and $h$ is differentiable). We conclude that there is a collection $\Q'\subset \Q$ with $|\mu|$-measure 0 such that, for all $Q\in \Q\setminus \Q'$, $\alpha_Q$ is a positive measure that has support on $Q\cap\{s:U(s)=h(s)\}$. 
\end{proof}

\begin{proof}[Proof of \autoref{cor:suff_one_dim}]
Let $U$ be an optimal indirect utility and $\Q$ a corresponding partition. Since $\mu|_Q(Q)\ge 0$ and $\mu|_Q\le_{cx} \delta_Q$, any pooling region\footnote{That is, any $Q$ such that $Q\cap S$ contains strictly more than one element.} $Q\in \Q$ must contain types with $\nu(s)\ge 0$ and types with $\nu(s)\le s$.

If $\nu(\underline{s})\ge 0$ and $\nu(\overline{s})\ge 0$, $\nu$ is positive everywhere and the claim follows. So suppose $\nu(\underline{s})< 0$; then there is a pooling region $Q:=[x,y]\in\Q$ which contains $\underline{s}$ and some $s$ with $\nu(s)>0$. If $\nu(y)<0$, then $[x,y]\subseteq Q$ must hold and the claim follows. Therefore, assume $\nu(y)\ge 0$. The measure $\delta_Q$ from \autoref{thm:main_result} must be a point mass at some $z\in Q$ with $\nu(z)\ge 0$ (if $\delta_Q$ were the zero measure or a point mass at $z'$ with $\nu(z')<0$, then $\int x-x^* \,\mathrm d\mu|_Q > \int x-x^* \,\mathrm d \delta_Q$ whenever $x^*=\inf \{x:\nu(x)\ge 0\}$, which contradicts $\mu|_Q\le_{cx} \delta_Q$).
It follows that $U(z)=h(z)$.

If $\nu(\overline{s})\ge 0$ then $\nu(s)\ge 0$ for all $s\in[z,\overline{s}]$ and delegating to $[z,\overline{s}]$ is optimal. If $\nu(\overline{s})<0$, repeating our previous argument implies that there is an interval $[x',y']\in\Q$ which contains $\overline{s}$ and some $z'$ with $\nu(z')\ge 0$ and $U(z')=h(z')$. Since $\nu(s)\ge 0$ for all $s\in[z,z']$, delegating to $[z,z']$ is optimal. If $\nu(\underline{s})\ge 0$ and $\nu(\overline{s})<0$, a symmetric argument applies.
\end{proof}

\begin{proof}[Proof of \autoref{cor:log_concave}]
It follows from \eqref{eq:nu_onedimensional} that $\nu(s)= f(s)\left[ 1- \beta \frac{f'(s)}{f(s)} \right]$ for $s\in(\underline{s},\overline{s})$. If $\beta\ge 0$ then $\nu$ is singlecrossing from below on $(\underline{s},\overline{s})$ (since $f$ is logconcave) and $\nu(\underline{s})\le 0$. The claim then follows from \autoref{cor:suff_one_dim}.
\end{proof}

\begin{proof}[Proof of \autoref{cor:convex_delegation}]
The corresponding indirect utility induces the partition with the following elements: for any $a$ in the interior of $A$,  $\{a\}$, and for any $a\in \bd A$, the normal cone $N_A(a)$, which is a ray through $a$ and orthogonal to $\bd A$. 
% Indeed, for any $s$ in $N_A(a)$, we have
% \[ s\cdot a + b(a)= a\cdot a + b(a)+(s-a)\cdot a\ge a\cdot a' + b(a') + (s-a)\cdot a'=s\cdot a'+b(a'). \]
For any such normal ray $Q$, condition (ii) is equivalent to $\mu|_Q\ge_{cx} \delta_Q$ by the same argument as in  \autoref{cor:onedimension}.

``$\Leftarrow$'': Condition (i) ensures that $\mu|_{\{a\}}$ is positive for all $a$ in the interior of $A$. Since it has singleton support, $\mu|_{\{a\}}\ge_{cx} \delta_{\{a\}}$. For any normal ray $Q$, $\mu|_Q\ge_{cx} \delta_Q$ by condition (ii) and $\mu|_Q(Q)\ge 0$ since $\int_0^{\infty} \nu(s+x \n_A(s)) \,\mathrm d\lambda(s+x \n_A(s)|ray)\ge 0$ follows from condition (ii). It follows from \autoref{thm:main_result} that $U$ is optimal.

``$\Rightarrow$'':
If $\nu(a)<0$ for some $a$ in the interior of $A$ then there is a subset of $A$ with positive $|\mu|$-measure on which $\nu$ is strictly negative, which implies $\mu|_Q(Q)<0$ on a set of positive measure, which contradicts optimality of $U$. Similarly, if  $\nu(a)<0$ for some $a\in \bd A$ then it can be shown that $\mu|_Q(Q)<0$ on a set of positive measure, which contradicts optimality of $U$ by \autoref{thm:main_result}.

If condition (ii) is violated, $\mu|_Q \not\ge_{cx} \delta_Q$ on a set of positive measure, which again contradicts optimality of $U$ by \autoref{thm:main_result}.
\end{proof}

%\listoftodos

\end{document}